\newcommand\version{November 21, 2007}
\newtheorem{theorem}{Theorem}[section]
\newtheorem{proposition}[theorem]{Proposition}
\newtheorem{lemma}[theorem]{Lemma}
\newtheorem{corollary}[theorem]{Corollary}
\theoremstyle{definition}
\theoremstyle{remark}
\newtheorem{remark}[theorem]{Remark}
\numberwithin{equation}{section}
\newcommand{\C}{\mathbb{C}}
\newcommand{\cl}{\mathrm{cl}}
\renewcommand{\epsilon}{\varepsilon}
\newcommand{\loc}{{\rm loc}}
\renewcommand{\phi}{\varphi}
\newcommand{\R}{\mathbb{R}}
\DeclareMathOperator{\dom}{dom}
\DeclareMathOperator{\im}{Im}
\DeclareMathOperator{\re}{Re}
\DeclareMathOperator{\tr}{tr}
\title[Schr\"odinger operators with surface potentials -- \version]{Spectral inequalities for Schr\"odinger operators with surface potentials}
\author{Rupert L. Frank}
\address{Rupert L. Frank, Department of Mathematics,
  Princeton University, Fine Hall, Princeton, NJ 08544, USA}
\email{rlfrank@math.princeton.edu}
\author{Ari Laptev}
\address{Ari Laptev, Department of Mathematics, Imperial College London, London SW7 2AZ, UK $\&$ Department of Mathematics, Royal Institute of Technology, 100 44 Stockholm, Sweden}
\email{a.laptev@imperial.ac.uk $\&$ laptev@math.kth.se}
\begin{document}

\begin{abstract}
We prove sharp Lieb-Thirring inequalities for Schr\"odinger operators with potentials supported on a hyperplane and we show how these estimates are related to Lieb-Thirring inequalities for relativistic Schr\"odinger operators.
\end{abstract}

\thanks{\copyright\, 2007 by the authors. This paper may be reproduced, in its entirety, for non-commercial purposes.}

\dedicatory{Dedicated to M. Sh. Birman on the occasion of his 80th birthday}

\maketitle


\section{Introduction}

The Cwikel-Lieb-Rozenblum and the Lieb-Thirring inequalities estimate the number and moments of eigenvalues of Schr\"odinger operators $-\Delta-V$ in $L_2(\R^N)$ in terms of an integral of the potential $V$. They state that the bound 
\begin{equation}
  \label{eq:lt}
  \tr(-\Delta-V)_-^\gamma \leq L_{\gamma,N} \int_{\R^N} V(x)^{\gamma+N/2}_+\,dx
\end{equation}
holds with a constant $L_{\gamma,N}$ independent of $V$ iff $\gamma\geq 1/2$ for $N=1$, $\gamma>0$ for $N=2$ and $\gamma\geq 0$ for $N\geq 3$. Here and below $t_\pm:=\max\{0,\pm t\}$ denotes the positive and negative part of a real number, a real-valued function or a self-adjoint operator $t$. In particular, the problem of finding the optimal value of the constant $L_{\gamma,N}$ has attracted a lot of attention recently. We refer to the review articles \cite{H2,LW2} for background information, references and applications of \eqref{eq:lt}.

The purpose of the present paper is twofold. First, we would like to find an analog of inequality \eqref{eq:lt} for Schr\"odinger operators with singular potentials $V(x)=v(x_1,\ldots,x_d)\delta(x_N)$, $d:=N-1$, supported on a hyperplane. It turns out that such an inequality is indeed valid, provided the integral on the right hand side of \eqref{eq:lt} is replaced by 
$$
	\int_{\R^d} v(x_1,\ldots,x_d)_+^{2\gamma+d}\,dx_1\ldots dx_d\, .
$$
We determine the complete range of $\gamma$'s for which the resulting inequality holds. Moreover, we find the sharp values of the constants for $\gamma\geq 3/2$ by using the method of `lifting with respect to the dimension'. This provides yet another example of the power and flexibility of this method, which was used by Laptev and Weidl \cite{LW} to obtain the sharp constants in \eqref{eq:lt} for $\gamma\geq 3/2$.

The second purpose of this paper is to point out a relation between the Schr\"odinger operator $-\Delta-v(x_1,\ldots,x_d)\delta(x_N)$ in $L_2(\R^N)$ and the relativistic Schr\"odinger operator $\sqrt{-\Delta}-v(x_1,\ldots,x_d)$ in $L_2(\R^d)$. Note that the space dimension $d=N-1$ of the relativistic operator differs from that of the non-relativistic operator. The basic idea is to relate eigenfunctions of the Schr\"odinger operator with singular potential to eigenfunctions of a non-linear eigenvalue problem involving the relativistic Schr\"odinger operator. This construction is essentially the Poisson extension and is implicit in several earlier works, e.g., in \cite{CL,FS,CS}. In our context the connection between the two operators becomes useful when combined with a monotonicity argument in the spirit of the Birman-Schwinger principle. It allows us both to prove the singular analog of inequality \eqref{eq:lt} and to (slightly) improve upon the known constants in Lieb-Thirring inequalities for relativistic Schr\"odinger operators.


\section{Schr\"odinger operators with surface potentials}\label{sec:main}

\subsection{Main results}

In this section we consider the operator
\begin{equation}\label{eq:op}
H(v)\,u=-\Delta u
\quad\text{ in } \R^{d+1}_+:=\{(x,y):\ x\in\R^d, y>0 \}
\end{equation}
together with boundary conditions of the third type
\begin{equation}\label{eq:bc}
\frac{\partial u}{\partial\nu} - v u = 0
\quad\text{on }\R^d\times\{0\}\,.
\end{equation}
Here $\partial/\partial\nu=-\partial/\partial y$ denotes the (exterior) normal derivative and $v$ a real-valued function on $\R^d$. If $v$ is form-compact with respect to $\sqrt{-\Delta}$ in $L_2(\R^d)$, then $H(v)$ can be defined as a self-adjoint operator in $L_2(\R^{d+1}_+)$ by means of the quadratic form
\begin{equation}\label{eq:form}
  \iint_{\R^{d+1}_+} |\nabla u|^2 \,dx\,dy - \int_{\R^d} v(x)|u(x,0)|^2\,dx,
  \quad u\in H^1(\R^{d+1}_+)\, .
\end{equation}
The negative spectrum of $H(v)$ consists of eigenvalues of finite multiplicities. We shall prove

\begin{theorem}[\textbf{Lieb-Thirring inequalities for surface potentials}]
  \label{main1}
  The inequality 
  \begin{equation}\label{eq:main1}
    \tr\left[ H(v) \right]_-^\gamma
    \leq S_{\gamma,d} \int_{\R^d} v(x)_+^{2\gamma+d} \,dx
  \end{equation}
  holds for all $0\leq v\in L_{2\gamma+d}(\R^d)$ iff
  \begin{equation}
    \label{eq:ltsurfass}
    \gamma>0 \quad\text{if } d=1,
    \qquad \text{and} \qquad
    \gamma\geq 0 \quad\text{if } d\geq 2 \, .
  \end{equation}
\end{theorem}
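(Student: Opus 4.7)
The plan is to reduce \eqref{eq:main1} to the known Lieb--Thirring inequality for the relativistic Schr\"odinger operator $\sqrt{-\Delta}-v$ on $L_2(\R^d)$, via the Poisson extension. If $u$ is an eigenfunction of $H(v)$ with eigenvalue $-\mu$, $\mu>0$, then $u$ solves $(-\Delta+\mu)u=0$ in $\R^{d+1}_+$, so the $L_2$ solution with trace $f:=u(\cdot,0)$ is $u(x,y)=(e^{-y\sqrt{-\Delta+\mu}}f)(x)$, and the boundary condition \eqref{eq:bc} becomes the non-linear eigenvalue equation $\sqrt{-\Delta+\mu}\,f=v\,f$. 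A Birman--Schwinger-type monotonicity in the coupling constant then yields, for every $\tau>0$,
\begin{equation*}
 N(v,\tau):=\#\{\text{eigenvalues of }H(v)\leq -\tau\}=n_-\bigl(\sqrt{-\Delta+\tau}-v\bigr),
\end{equation*}
where $n_-(\cdot)$ denotes the number of negative eigenvalues of the relevant self-adjoint operator on $L_2(\R^d)$.

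Next, the elementary operator inequality $\sqrt{-\Delta+\tau}\geq\tfrac{1}{\sqrt{2}}(\sqrt{-\Delta}+\sqrt{\tau})$, which follows from functional calculus and the pointwise bound $(|\xi|-\sqrt{\tau})^2\geq 0$, gives $N(v,\tau)\leq n_-(\sqrt{-\Delta}-\sqrt{2}\,v+\sqrt{\tau})$, i.e.\ the number of eigenvalues of $\sqrt{-\Delta}-\sqrt{2}\,v$ below $-\sqrt{\tau}$. Writing the moment as $\tr[H(v)]_-^\gamma=\gamma\int_0^\infty\tau^{\gamma-1}N(v,\tau)\,d\tau$ and substituting $s=\sqrt{\tau}$, I obtain
\begin{equation*}
 \tr\left[H(v)\right]_-^\gamma\leq \tr\bigl(\sqrt{-\Delta}-\sqrt{2}\,v\bigr)_-^{2\gamma}.
\end{equation*}
The desired bound \eqref{eq:main1} now follows from Daubechies' Lieb--Thirring inequality $\tr(\sqrt{-\Delta}-w)_-^\sigma\leq C_{\sigma,d}\int w_+^{\sigma+d}\,dx$, known to hold precisely for $\sigma>0$ in $d=1$ and $\sigma\geq 0$ in $d\geq 2$; with $\sigma=2\gamma$ this matches exactly the range \eqref{eq:ltsurfass}.

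For the converse, the same correspondence reduces the failure of \eqref{eq:main1} outside \eqref{eq:ltsurfass} to the known failures of Daubechies' inequality: in $d=1$ with $\gamma=0$, a sum of many well-separated small bumps produces arbitrarily many weakly bound states of $\sqrt{-\Delta}-v$, and hence of $H(v)$, with $\int v$ uniformly small; for $\gamma<0$, a single weakly coupled bound state with eigenvalue arbitrarily close to $0$ is enough. I expect the main technical obstacle to be a clean justification of the Birman--Schwinger identity $N(v,\tau)=n_-(\sqrt{-\Delta+\tau}-v)$, since the underlying boundary eigenvalue problem depends non-linearly on the spectral parameter $\mu$, so that the matching of multiplicities between the two sides has to be argued via a continuous deformation of the coupling.
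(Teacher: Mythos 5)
Your proposal is correct and follows essentially the same route as the paper: the Poisson-extension correspondence between $H(v)$ and the nonlinear eigenvalue problem for $\sqrt{-\Delta+\tau}-v$, the Birman--Schwinger monotonicity identity $N(-\tau,H(v))=N(\sqrt{-\Delta+\tau}-v)$ (resolved in the paper by monotonicity of the eigenvalues of $\sqrt{-\Delta+t}-v$ in $t$, without any deformation of coupling), the elementary bound $\sqrt{-\Delta+\tau}\geq\rho\sqrt{-\Delta}+\sqrt{1-\rho^{2}}\,\sqrt{\tau}$ with the particular choice $\rho=1/\sqrt2$ (the paper keeps $\rho$ free, which yields the sharper two-sided constant bounds \eqref{eq:dualitymoments2}), and finally Daubechies' result for the exact range of $\gamma$. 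The only inefficiency is in your converse for $(\gamma,d)=(0,1)$: there is no need for many well-separated bumps, since a single nontrivial $v\geq0$ already gives $N(H(v))=N(\sqrt{-\Delta}-v)\geq1$ while $\int v$ can be made arbitrarily small.
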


Inequality \eqref{eq:main1} reflects the correct order of growth in the strong coupling limit, as can be seen from the Weyl-type asymptotics
\begin{equation}\label{eq:weylsurf}
  \lim_{\alpha\to\infty}
  \alpha^{-2\gamma-d}\tr\left[ H(\alpha\, v) \right]_-^\gamma
  = L_{\gamma,d}^\cl \int_{\R^d} v(x)_+^{2\gamma+d}\,dx\, ,
\end{equation}
with
\begin{equation}
  \label{eq:surfclass}
  L_{\gamma,d}^\cl 
  := 2^{-d} \pi^{-d/2} \frac{\Gamma(\gamma+1)}{\Gamma(\gamma+d/2+1)}\, .
\end{equation}
Since relation \eqref{eq:weylsurf} is not completely standard we comment on its proof in Remark \ref{weyl} below.

Our second result concerns the constants in the bounds of Theorem \ref{main1}. Denoting by $S_{\gamma,d}$ the \emph{sharp} constant in \eqref{eq:main1} we infer from \eqref{eq:weylsurf} that
\begin{equation}
  S_{\gamma,d} \geq L_{\gamma,d}^\cl\, .
\end{equation}
We shall prove that for sufficiently large values of $\gamma$ one actually has equality.

\begin{theorem}[\textbf{Sharp constants}]
  \label{main2}
  Let $d\geq 1$ and $\gamma\geq 3/2$. Then the sharp constant in \eqref{eq:main1} is $S_{\gamma,d} = L_{\gamma,d}^\cl$.
\end{theorem}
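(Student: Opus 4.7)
The plan is to extend the dimensional-reduction (``lifting'') argument of Laptev--Weidl \cite{LW} to the present setting, carrying out the lift in a tangential (surface) direction. The external input I would use is their sharp one-dimensional \emph{operator-valued} Lieb--Thirring inequality: for any separable Hilbert space $\mathcal{H}$, any self-adjoint operator-valued function $t \mapsto W(t)$ on $\mathcal{H}$, and any $\gamma \geq 3/2$,
\begin{equation*}
\tr_{L_2(\R)\otimes\mathcal{H}}\bigl(-\partial_t^2\otimes I + W(t)\bigr)_-^\gamma
\leq L_{\gamma,1}^{\cl} \int_\R \tr_{\mathcal{H}} W(t)_-^{\gamma+1/2}\,dt.
\end{equation*}
Since $\gamma\geq 3/2$, every exponent arising in an iterated application stays $\geq 3/2$, so this inequality may be invoked repeatedly.

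Splitting $x=(x_1,x')\in\R\times\R^{d-1}$ and identifying $L_2(\R_+^{d+1})\simeq L_2(\R;\mathcal{H})$ with $\mathcal{H}:=L_2(\R^{d-1}\times\R_+)$, the quadratic form \eqref{eq:form} decomposes fiberwise as
\begin{equation*}
\int_\R\Bigl\{\|\partial_{x_1}u(x_1,\cdot,\cdot)\|_{\mathcal{H}}^2 + q_{x_1}[u(x_1,\cdot,\cdot)]\Bigr\}\,dx_1,
\end{equation*}
where $q_{x_1}$ is precisely the form associated with the lower-dimensional surface operator $H_{d-1}(v(x_1,\cdot))$ on $\mathcal{H}$. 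Consequently $H(v) = -\partial_{x_1}^2\otimes I + H_{d-1}(v(x_1,\cdot))$ as quadratic forms, and the operator-valued Laptev--Weidl inequality applied to the lifted coordinate gives
\begin{equation*}
\tr H(v)_-^\gamma \leq L_{\gamma,1}^{\cl}\int_\R \tr_{\mathcal{H}}\bigl[H_{d-1}(v(x_1,\cdot))\bigr]_-^{\gamma+1/2}\,dx_1.
\end{equation*}
I would then iterate this procedure $d$ times, once per tangential coordinate, reaching the base case $d=0$. Here $H_0(v)$ is the Robin Laplacian $-\partial_y^2$ on $L_2(\R_+)$ with $\partial_y u(0) = -v\,u(0)$, which has the single negative eigenvalue $-v_+^2$; thus $\tr H_0(v)_-^\alpha = v_+^{2\alpha} = L_{\alpha,0}^{\cl}v_+^{2\alpha}$ (with $L_{\alpha,0}^{\cl}=1$).

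Collecting the constants produced by the $d$ successive liftings yields the prefactor
\begin{equation*}
\prod_{k=0}^{d-1} L_{\gamma+k/2,\,1}^{\cl}
= (4\pi)^{-d/2}\prod_{k=0}^{d-1}\frac{\Gamma(\gamma+k/2+1)}{\Gamma(\gamma+k/2+3/2)}
= (4\pi)^{-d/2}\frac{\Gamma(\gamma+1)}{\Gamma(\gamma+d/2+1)} = L_{\gamma,d}^{\cl},
\end{equation*}
after telescoping of the $\Gamma$-quotients, in agreement with \eqref{eq:surfclass}. Together with the Weyl lower bound $S_{\gamma,d}\geq L_{\gamma,d}^{\cl}$ furnished by \eqref{eq:weylsurf}, this establishes the theorem. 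The main technical obstacle is justifying the fiber decomposition $H(v)=-\partial_{x_1}^2\otimes I + H_{d-1}(v(x_1,\cdot))$ as a bona fide operator-valued Schr\"odinger operator to which the Laptev--Weidl inequality applies: one must verify that $x_1\mapsto H_{d-1}(v(x_1,\cdot))_-$ is weakly measurable and lies in $L_1(\R;\mathfrak{S}_{\gamma+1/2}(\mathcal{H}))$. Both points follow from the pointwise bound supplied by the inductive hypothesis, first for smooth compactly supported $v$ and then extended by a standard approximation argument.
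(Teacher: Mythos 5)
Your proposal is correct and follows essentially the same route as the paper: induction on $d$ via the fiber decomposition $H_d(v)=-\partial_{x_1}^2\otimes I+H_{d-1}(v(x_1,\cdot))$, the sharp operator-valued Laptev--Weidl inequality at each step, the base case $H_0(v)$ with its single eigenvalue $-v_+^2$, the telescoping product of constants, and the Weyl asymptotics for the lower bound. The only difference is that the paper carries an extra spectral shift $\tau\geq 0$ through the induction (yielding the slightly stronger Theorem \ref{ltsharp}), which is not needed for Theorem \ref{main2} itself.
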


We prove this theorem in Subsection \ref{sec:main2proof}. Besides the sharp constants for $\gamma\geq 3/2$ our method yields explicit and tight bounds for $S_{\gamma,d}$ for arbitrary $\gamma$. In particular, we prove
\begin{align}\label{eq:constants}
  S_{\gamma,d} \leq \frac{\pi}{\sqrt 3}\, L_{\gamma,d}^\cl &
  \qquad \text{if } 1\leq\gamma< 3/2 \text{ and } d \geq 1, \notag\\
  S_{\gamma,1} \leq 2\, L_{\gamma,1}^\cl &
  \qquad \text{if } 1/2\leq\gamma< 1 \text{ and }\, d = 1, \\
  S_{\gamma,d} \leq \frac{2\pi}{\sqrt 3}\, L_{\gamma,d}^\cl &
  \qquad \text{if } 1/2\leq\gamma< 1 \text{ and }\, d \geq 2, \notag
\end{align}
see Remark \ref{constants}. Moreover, the constants in the estimates for the number of negative eigenvalues satisfy
\begin{align}
  \label{eq:constantsnumber2}
    S_{0,2} \leq 6.04\, L_{0,2}^\cl &
  \qquad \text{if } \gamma=0 \text{ and } d = 2,\\
  \label{eq:constantsnumber3}
    S_{0,3} \leq 6.07\, L_{0,3}^\cl &
  \qquad \text{if } \gamma=0 \text{ and } d = 3,\\
  \label{eq:constantsnumber}
  S_{0,d} \leq 10.332\, L_{0,d}^\cl &
  \qquad \text{if } \gamma= 0 \text{ and }\, d \geq 4, 
\end{align}
see Remark \ref{constants} and Subsection \ref{sec:main1proof}. The upper bounds \eqref{eq:constantsnumber2}, \eqref{eq:constantsnumber3} can be supplemented by the lower bounds
\begin{align}
  \label{eq:constantsnumberlower2}
  S_{0,2} \geq 4 \, L_{0,2}^\cl &
  \qquad \text{if } \gamma=0 \text{ and } d = 2,\\
  \label{eq:constantsnumberlower3}
  S_{0,3} \geq 3 \, L_{0,3}^\cl &
  \qquad \text{if } \gamma=0 \text{ and } d = 3 ,
\end{align}
see Subsection \ref{sec:main1proof}. In particular, for $d=2$ the upper bound \eqref{eq:constantsnumber2} is off by at most a factor~$1.51$.


\subsection{Lifting with respect to dimension}\label{sec:main2proof}

In this subsection we use an argument in the spirit of Laptev and Weidl \cite{LW} to prove

\begin{theorem}\label{ltsharp}
  Let $d\geq 1$, $\gamma\geq 3/2$ and $\tau\geq 0$. Then
  \begin{equation}\label{eq:ltsharp}
    \tr\left[ H(v) +\tau \right]_-^\gamma
    \leq L_{\gamma,d}^\cl 
    \int_{\R^d} \left(v(x)_+^2-\tau\right)_+^{\gamma+d/2}\,dx
  \end{equation}
  with $L_{\gamma,d}^\cl$ defined in \eqref{eq:surfclass}.
\end{theorem}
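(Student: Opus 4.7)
The plan is to prove the inequality by induction on $d$, starting from the base case $d=0$ and using the sharp operator-valued one-dimensional Lieb--Thirring inequality of Laptev and Weidl \cite{LW} as the inductive step.

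For the base case $d=0$, the operator $H(v)$ acts as $-\partial_y^2$ on $L_2(\R_+)$ with Robin condition $-u'(0)=v\,u(0)$, where $v\in\R$ is now a constant. A direct computation shows that this operator possesses exactly one negative eigenvalue, namely $-v^2$, when $v>0$, and is non-negative otherwise; hence $\tr[H(v)+\tau]_-^\gamma=(v_+^2-\tau)_+^\gamma$. Since $L_{\gamma,0}^\cl=1$, this is exactly \eqref{eq:ltsharp} with equality.

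For the inductive step, write $x=(x',x_d)\in\R^{d-1}\times\R$ and regard $H(v)$ as a Schr\"odinger operator in the single variable $x_d$ with operator-valued potential. More precisely, decompose the ambient space as $L_2(\R_{x_d};\mathcal{G})$ with $\mathcal{G}:=L_2(\R^{d-1}_{x'}\times\R^+_y)$ and observe, at the level of the quadratic form \eqref{eq:form}, that
$$
H(v) \;=\; -\partial_{x_d}^2 \;+\; H(v(\cdot,x_d)),
$$
where on the right the operator acts fiberwise in $\mathcal{G}$ and corresponds to the $(d-1)$-dimensional version of \eqref{eq:op}--\eqref{eq:bc}. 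Applying the Laptev--Weidl sharp operator-valued one-dimensional Lieb--Thirring inequality (which is valid precisely for $\gamma\geq 3/2$) to $A(x_d):=H(v(\cdot,x_d))+\tau$ gives
$$
\tr[H(v)+\tau]_-^\gamma \;\leq\; L_{\gamma,1}^\cl \int_{\R} \tr\!\left[H(v(\cdot,x_d))+\tau\right]_-^{\gamma+1/2} dx_d.
$$
The inductive hypothesis, applied with the parameters $(\gamma+1/2,d-1)$ which still satisfy the hypothesis $\gamma+1/2\geq 2\geq 3/2$, estimates the inner trace by $L_{\gamma+1/2,d-1}^\cl\int_{\R^{d-1}}(v(x',x_d)_+^2-\tau)_+^{\gamma+d/2}dx'$. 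Substituting and using the elementary identity $L_{\gamma,1}^\cl\,L_{\gamma+1/2,d-1}^\cl=L_{\gamma,d}^\cl$, which is immediate from the Gamma-function form \eqref{eq:surfclass} via a telescoping cancellation, completes the induction.

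The main technical point is to justify the fibered decomposition $H(v)=-\partial_{x_d}^2+H(v(\cdot,x_d))$ rigorously, since $H$ is defined only through the boundary quadratic form \eqref{eq:form}. This can be handled by first verifying the identity for smooth compactly supported $v$, where the splitting of \eqref{eq:form} into $\int|\partial_{x_d}u|^2\,dx_d\,dx'\,dy$ plus the transverse quadratic form of $H(v(\cdot,x_d))$ is transparent, and then extending to general $v\in L_{2\gamma+d}(\R^d)$ by a standard approximation together with the monotonicity of both sides of \eqref{eq:ltsharp} in $v$. The measurability of $x_d\mapsto H(v(\cdot,x_d))$ required by the Laptev--Weidl inequality follows from form-continuity in $v$, so no further obstacle arises.
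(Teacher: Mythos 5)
Your proof is correct and follows essentially the same route as the paper: induction on $d$ with the explicit $d=0$ base case and the sharp operator-valued Laptev--Weidl inequality as the lifting step. The only cosmetic difference is that the paper passes first to the operator inequality $H_d(v)+\tau\geq -\tfrac{d^2}{dx_1^2}\otimes 1-[H_{d-1}(v(x_1,\cdot))+\tau]_-$ before invoking the operator-valued bound (whose hypothesis requires a non-negative operator-valued potential), a step you should make explicit, but your displayed intermediate inequality is exactly the paper's \eqref{eq:ltlw}.
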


Choosing $\tau=0$ and recalling \eqref{eq:weylsurf} we obtain Theorem \ref{main2}.

\begin{proof}
  We shall prove Theorem \ref{ltsharp} by induction over $d$. It is convenient to reflect the dependence on $d$ in the notation of the quadratic form, so we write
  \begin{equation*}
    h_d(v)[u] := \iint_{\R_+^{d+1}} |\nabla u|^2 \,dx\,dy 
    - \int_{\R^d} v(x) |u(x,0)|^2 \,dx\,dy  
  \end{equation*}
  and $H_d(v)$ for the associated operator. Note that this operator is also well-defined for $d=0$ and $v$ a non-negative real number. Indeed, in this case one has $H_0(v)u=-u''$ and $u'(0)=-vu(0)$ for $u\in\dom H_0(v)$, and one easily finds that $H_0(v)$ has one negative eigenvalue, namely $-v_+^2$. Hence
  \begin{equation*}
    \tr_{L_2(\R_+)}\left[ H_0(v) +\tau \right]_-^\gamma
    = \left(v_+^2-\tau\right)_+^{\gamma},
  \end{equation*}
  which is the analog of \eqref{eq:ltsharp} for $d=0$ and all $\gamma\geq 0$.
  
  Now we fix $d\geq 1$ and assume that the assertion is already proved for all smaller dimensions. We write $x=(x_1,x')$ with $x_1\in\R$, $x'\in\R^{d-1}$ and note that
  \begin{equation*}
    H_d(v)+\tau 
    \geq -\frac{d^2}{dx_1^2}\otimes 1_{L_2(\R^{d}_+)} 
    - \left[ H_{d-1}(v(x_1,\cdot))+\tau\right]_-
  \end{equation*}
  with the identification $L_2(\R^{d+1}_+) = L_2(\R)\otimes L_2(\R^{d}_+)$. Hence the variational principle and the operator-valued Lieb-Thirring inequality from \cite{LW} yields for all $\gamma\geq 3/2$
  \begin{equation}\label{eq:ltlw}
    \tr_{L_2(\R^{d+1}_+)}\left[ H_d(v) +\tau \right]_-^\gamma
    \leq L_{\gamma,1}^\cl \int_{-\infty}^\infty 
    \tr_{L_2(\R^{d}_+)}\left[ H_{d-1}(v(x_1,\cdot)) +\tau \right]_-^{\gamma+1/2}
    \,dx_1.
  \end{equation}
  By induction hypothesis, the right hand side is bounded from above by
  \begin{align*}
    L_{\gamma,1}^\cl L_{\gamma+1/2,d-1}^\cl
    \int_{-\infty}^\infty \left( \int_{\R^{d-1}} 
    (v(x_1,x')_+^2 - \tau)_+^{\gamma+d/2}
    \,dx'\right) \,dx_1
    = L_{\gamma,d}^\cl \int_{\R^{d}} (v(x)_+^2 - \tau)_+^{\gamma+d/2}\,dx\, ,
  \end{align*}
  which establishes the assertion for dimension $d$ and completes the proof of Theorem \ref{ltsharp}.
\end{proof}

\begin{remark}
  \label{constants}
  The above approach can be used to prove inequality \eqref{eq:main1} for $\gamma\geq 1/2$ and to obtain bounds \eqref{eq:constants} for the sharp constants $S_{\gamma,d}$. Indeed, according to \cite{HLW} and \cite{DLL} the operator-valued inequality \eqref{eq:ltlw} holds with an additional factor of $\pi/\sqrt 3$ on the right hand side if $\gamma\geq 1$, with an additional factor of $2$ if $1/2\leq \gamma<1$ and $d=1$ and with an additional factor of $2\pi/\sqrt 3$ if $\gamma\geq 1/2$ and $d\geq 2$.

Similarly, one can use the operator-valued inequality from \cite{FLS2} (see also \cite{H}) to prove \eqref{eq:main1} for $\gamma\geq 0$ and $d\geq 3$ and to obtain \eqref{eq:constantsnumber}. Extending the original proof of Lieb and Thirring \cite{LT} to the operator-valued case would yield \eqref{eq:main1} for $\gamma> 0$ and $d=2$. However, we do not know how to prove \eqref{eq:main1} with the operator-valued approach for $0<\gamma< 1/2$ if $d=1$ and for $\gamma=0$ if $d=2$. We give a proof based on a different idea in Subsection \ref{sec:main1proof} below.
\end{remark}


\subsection{Additional remarks}

\subsubsection{Magnetic fields}
Let $A\in L_{2,\loc}(\overline{\R_+^{d+1}},\R^{d+1})$ and let the operator $H(A,v)$ be defined through the closure of the quadratic form
\begin{equation*}
  \iint_{\R^{d+1}_+} |(-i\nabla-A) u|^2 \,dx\,dy - \int_{\R^d} v(x)|u(x,0)|^2\,dx,
  \quad u\in C_0^\infty(\overline{\R^{d+1}_+})\, .
\end{equation*}
By a similar argument as in \cite{LW} one can prove that Theorem \ref{ltsharp} remains true, with the same constant, if $H(v)$ is replaced by $H(A,v)$. More generally, all the inequalities sketched in Remark \ref{constants} remain true. The argument behind Theorem \ref{main1}, however, allows only $A$ which are independent of $y$ and orthogonal to the $y$-direction, see Remark \ref{dualityabstract}. To obtain the analog of \eqref{eq:main1} for the complete range of $\gamma$'s given in \eqref{eq:ltsurfass}, one can rely upon an abstract operator-theoretic argument, see \cite{R} for $\gamma=0$ and \cite{F} for $\gamma>0$.

\subsubsection{Leaky graph Hamiltonians}
In the previous subsections we studied eigenvalues of the Laplacian $H(v)$ on the halfspace with a perturbation by boundary conditions. This problem is essentially equivalent to the study of eigenvalues of the Schr\"odinger operator
$$
\tilde H(v) = -\Delta - v(x)\delta(y)
$$
in the whole space $\R^{d+1}$ with a potential supported on a hyperplane. The precise definition of the operator $\tilde H(v)$ in $L_2(\R^{d+1})$ is given via the quadratic form
$$
\iint_{\R^{d+1}} |\nabla u|^2 \,dx\,dy - \int_{\R^d} v(x)|u(x,0)|^2\,dx,
\quad u\in H^1(\R^{d+1})\, .
$$
The decomposition of a function into an even and an odd part with respect to the variable $y$ induces an orthogonal decomposition of the space $L_2(\R^{d+1})$, which reduces the operator $\tilde H(v)$. The part of $\tilde H(v)$ on odd functions is unitarily equivalent to the Dirichlet Laplacian on $\R_+^{d+1}$, whereas the part on even functions is unitarily equivalent to $H(\frac12 v)$. Hence
$$
\tr\left[\tilde H(v) \right]_-^\gamma 
= \tr\left[H(\tfrac 12 v) \right]_-^\gamma\,,
$$
and we obtain immediately the analogs of Theorems \ref{main1} and \ref{main2}.

\begin{theorem}
  \label{main1'}
  The inequality 
  \begin{equation}\label{eq:main1'}
    \tr\left[ \tilde H(v) \right]_-^\gamma
    \leq \tilde S_{\gamma,d} \int_{\R^d} v(x)_+^{2\gamma+d} \,dx
  \end{equation}
  holds for all $0\leq v\in L_{2\gamma+d}(\R^d)$ iff
  \begin{equation}
    \label{eq:ltsurfass'}
    \gamma>0 \quad\text{if } d=1,
    \qquad \text{and} \qquad
    \gamma\geq 0 \quad\text{if } d\geq 2 \, .
  \end{equation}
\end{theorem}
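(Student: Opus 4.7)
The plan is to reduce Theorem \ref{main1'} to Theorem \ref{main1} using the reflection symmetry $y\mapsto -y$, which leaves $\tilde H(v)$ invariant and induces an orthogonal decomposition $L_2(\R^{d+1})=L_2^{\mathrm{ev}}\oplus L_2^{\mathrm{odd}}$ into even and odd functions in $y$. Both the kinetic form $\iint|\nabla u|^2\,dx\,dy$ and the boundary pairing $\int v(x)|u(x,0)|^2\,dx$ are preserved by this reflection, so each subspace reduces $\tilde H(v)$. An odd function vanishes at $y=0$, so the singular boundary term disappears from the form, and the unitary $u\mapsto \sqrt 2\,u|_{\R^{d+1}_+}$ intertwines the restriction of $\tilde H(v)$ to $L_2^{\mathrm{odd}}$ with the Dirichlet Laplacian on $\R^{d+1}_+$. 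Since the Dirichlet Laplacian is non-negative, this sector contributes nothing to $\tr[\tilde H(v)]_-^\gamma$.

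For the even sector, the same unitary $u\mapsto\tilde u:=\sqrt 2\,u|_{\R^{d+1}_+}$ identifies $L_2^{\mathrm{ev}}$ with $L_2(\R^{d+1}_+)$, and restriction sends even $H^1(\R^{d+1})$-functions bijectively onto $H^1(\R^{d+1}_+)$. One then checks that
\begin{equation*}
\iint_{\R^{d+1}}|\nabla u|^2\,dx\,dy - \int_{\R^d} v(x)|u(x,0)|^2\,dx = \iint_{\R^{d+1}_+}|\nabla\tilde u|^2\,dx\,dy - \int_{\R^d} \tfrac{1}{2}v(x)|\tilde u(x,0)|^2\,dx,
\end{equation*}
the factor $\tfrac12$ arising because the bulk integral picks up an overall factor $2$ under the identification while the boundary term, which only sees the trace at $y=0$, does not. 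Hence the even part of $\tilde H(v)$ is unitarily equivalent to $H(\tfrac12 v)$, and
\begin{equation*}
\tr[\tilde H(v)]_-^\gamma = \tr[H(\tfrac12 v)]_-^\gamma .
\end{equation*}

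Applying Theorem \ref{main1} to the potential $\tfrac12 v$ gives the sufficiency half of the assertion with the explicit constant $\tilde S_{\gamma,d}=2^{-(2\gamma+d)}S_{\gamma,d}$. For the necessity half, the same identity, read as $\tr[\tilde H(2v)]_-^\gamma=\tr[H(v)]_-^\gamma$, shows that any sequence of test potentials witnessing the failure of \eqref{eq:main1} outside the range \eqref{eq:ltsurfass} would, after rescaling by $2$, witness the failure of \eqref{eq:main1'} outside \eqref{eq:ltsurfass'}, and vice versa. I anticipate no real obstacle here: the argument is a direct corollary of Theorem \ref{main1} whose only substantive ingredient is the elementary form computation above, which is essentially bookkeeping of a factor $2$ in the bulk versus boundary terms.
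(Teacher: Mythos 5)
Your proof is correct and follows the same route as the paper: the even/odd decomposition in $y$ reduces $\tilde H(v)$, the odd part is the (non-negative) Dirichlet Laplacian on the half-space, and the even part is unitarily equivalent to $H(\tfrac12 v)$, giving $\tr[\tilde H(v)]_-^\gamma = \tr[H(\tfrac12 v)]_-^\gamma$ and hence both implications from Theorem \ref{main1}. The only addition beyond the paper's terse remark is your explicit bookkeeping of the factor $2$ in the bulk versus the boundary term, which is accurate.
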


\begin{theorem}
  \label{main2'}
  Let $d\geq 1$ and $\gamma\geq 3/2$. Then the sharp constant in \eqref{eq:main1'} is $\tilde S_{\gamma,d} = 2^{-2\gamma-d} L_{\gamma,d}^\cl$.
\end{theorem}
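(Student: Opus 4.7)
The plan is very short because the excerpt has already done most of the work: the paragraph preceding the theorem establishes the operator identity
\begin{equation*}
  \tr\left[\tilde H(v) \right]_-^\gamma = \tr\left[H(\tfrac 12 v) \right]_-^\gamma,
\end{equation*}
coming from the even/odd decomposition in the $y$-variable (the odd part gives the Dirichlet Laplacian on $\R^{d+1}_+$, which is non-negative, while the even part is unitarily equivalent to $H(\tfrac12 v)$ via the restriction $u\mapsto \sqrt{2}\,u|_{y>0}$).

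Given this, the upper bound in Theorem~\ref{main2'} is immediate. First I would apply Theorem~\ref{main2} (which is available since $\gamma\ge 3/2$) to the right-hand side of the identity above, giving
\begin{equation*}
  \tr\left[H(\tfrac 12 v) \right]_-^\gamma
  \leq L_{\gamma,d}^\cl \int_{\R^d} \bigl(\tfrac12 v(x)\bigr)_+^{2\gamma+d}\,dx
  = 2^{-2\gamma-d} L_{\gamma,d}^\cl \int_{\R^d} v(x)_+^{2\gamma+d}\,dx,
\end{equation*}
which establishes \eqref{eq:main1'} with constant $\tilde S_{\gamma,d}\le 2^{-2\gamma-d} L_{\gamma,d}^\cl$.

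For the matching lower bound, I would transport the semiclassical asymptotics \eqref{eq:weylsurf} through the same identity: applying it with $v$ replaced by $\alpha v$ and letting $\alpha\to\infty$,
\begin{equation*}
  \lim_{\alpha\to\infty} \alpha^{-2\gamma-d} \tr\left[\tilde H(\alpha v)\right]_-^\gamma
  = \lim_{\alpha\to\infty} \alpha^{-2\gamma-d} \tr\left[H(\tfrac{\alpha}{2} v)\right]_-^\gamma
  = 2^{-2\gamma-d} L_{\gamma,d}^\cl \int_{\R^d} v(x)_+^{2\gamma+d}\,dx,
\end{equation*}
so that $\tilde S_{\gamma,d} \ge 2^{-2\gamma-d} L_{\gamma,d}^\cl$. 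Combining the two bounds yields the claim.

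There is essentially no obstacle here once the even/odd reduction is accepted; the only thing one might want to double-check is that the unitary equivalence between the even part of $\tilde H(v)$ and $H(\tfrac12 v)$ is set up with the correct factor (the factor $\tfrac12$ arises because the restriction map $u\mapsto\sqrt{2}\,u|_{y>0}$ is an isometry from even $H^1(\R^{d+1})$-functions onto $H^1(\R^{d+1}_+)$, and the boundary integral $\int v|u(x,0)|^2\,dx$ becomes $\int \tfrac12 v|u(x,0)|^2\,dx$ under this rescaling). This is a routine computation rather than a real difficulty.
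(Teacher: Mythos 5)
Your proposal is correct and follows exactly the paper's approach: the even/odd reduction gives $\tr[\tilde H(v)]_-^\gamma = \tr[H(\tfrac12 v)]_-^\gamma$, after which the upper bound is immediate from Theorem~\ref{main2} and the matching lower bound is inherited from the semiclassical asymptotics \eqref{eq:weylsurf}. The factor-of-$\tfrac12$ bookkeeping via the isometry $u\mapsto\sqrt2\,u|_{y>0}$ is also the correct way to see why $\tilde H(v)$ restricted to even functions is unitarily equivalent to $H(\tfrac12 v)$.
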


\subsubsection{Complex-valued surface potentials}
In applications one often encounters the boundary value problem \eqref{eq:op}, \eqref{eq:bc} with a \emph{complex-valued} function $v$, which leads to non-real eigenvalues. If $v$ is sufficiently regular, the quadratic form \eqref{eq:form} generates an $m$-sectorial operator which we continue to denote by $H(v)$. We denote by $\lambda_j(v)$, $j=1,2,\ldots$, the (at most countably many) eigenvalues of $H(v)$ in the cut plane $\C\setminus [0,\infty)$, repeated according to their algebraic multiplicities. Following the approach suggested in \cite{FLLS} one obtains

\begin{theorem}
  Let $d\geq 1$ and $\gamma\geq 1$.
  \begin{enumerate}
    \item
      For eigenvalues with negative real part
      $$
      \sum_{j:\ \re\lambda_j(v)<0} (-\re\lambda_j(v))^\gamma
      \leq S_{\gamma,d} \int_{\R^d} (\re v(x))_-^{2\gamma+d} \,dx\, .
      $$
    \item
      If $\kappa>0$, then for eigenvalues outside the cone $\{|\im z|<\kappa\re z\}$
      $$
      \sum_{j:\ |\im\lambda_j(v)|\geq \kappa\re\lambda_j(v)} |\lambda_j(v)|^\gamma
      \leq 2^{1+\gamma+d/2} \left(1+\frac2\kappa\right)^{2\gamma+d} S_{\gamma,d}
      \int_{\R^d} |v(x)|^{2\gamma+d} \,dx\, .
      $$
  \end{enumerate}
  Here $S_{\gamma,d}$ is the constant from \eqref{eq:main1}.
\end{theorem}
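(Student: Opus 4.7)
The plan is to adapt to the surface-potential setting the method of Frank, Laptev, Lieb and Seiringer~\cite{FLLS}, which established analogous bounds for Schr\"odinger operators $-\Delta+V$ with complex-valued volume potentials.

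For part~(i), the starting point is the elementary identity
\[
h_d(\re v)[u] \;=\; (\re\lambda)\,\|u\|^2
\]
obtained by taking the real part of $\langle H(v)u,u\rangle=\lambda\|u\|^2$ for an eigenfunction $u$ of $H(v)$ at an eigenvalue $\lambda$. Thus every eigenfunction corresponding to an eigenvalue $\lambda$ with $\re\lambda<0$ is a ``negative-energy state'' for the \emph{self-adjoint} operator $H(\re v)$. The argument of~\cite{FLLS} elevates this pointwise identity to the global comparison
\[
\sum_{j:\ \re\lambda_j(v)<0} (-\re\lambda_j(v))^\gamma
\;\le\; \tr\bigl[H(\re v)\bigr]_-^\gamma
\qquad (\gamma\ge 1)
\]
by means of a Birman--Schwinger factorisation combined with the Weyl inequality for singular values of products of compact operators. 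Since $H(\re v)\ge H((\re v)_+)$ in the form sense, Theorem~\ref{main1} applied to $H((\re v)_+)$ then yields the claimed bound with the same constant~$S_{\gamma,d}$.

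For part~(ii) the two kinds of eigenvalues outside the cone $\{|\im z|<\kappa\re z\}$ are treated separately. If $\re\lambda\le 0$, then the elementary geometric inequality $|\lambda|\le(1+1/\kappa)(-\re\lambda)$ reduces the bound to part~(i). If instead $\re\lambda>0$ and $|\im\lambda|\ge\kappa\re\lambda$, one invokes the Birman--Schwinger principle at a real reference energy $z_0<0$ to the left of the spectrum and uses the elementary estimate $|\lambda-z_0|\le(1+2/\kappa)(-\re(\lambda-z_0))$. The eigenvalues of $H(v)$ appear as zeros of the operator-valued function $1-|v|^{1/2}T(z)U|v|^{1/2}$, with $T(z)$ the boundary trace of the resolvent $(-\Delta-z)^{-1}$ on the hyperplane and $U$ the polar phase of $v$; bounding $|\lambda|^\gamma$ via part~(i) applied to the shifted operator, together with a dyadic splitting $|v|=|v|\chi_{\{|v|\le t\}}+|v|\chi_{\{|v|>t\}}$ in the spirit of~\cite{FLLS}, produces the pre-factor $2^{1+\gamma+d/2}(1+2/\kappa)^{2\gamma+d}$.

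The main obstacle is the global comparison step in part~(i): transferring from the single-eigenfunction identity to the summed bound cannot be done by a naive min-max argument, because generalised eigenfunctions of the non-self-adjoint $H(v)$ at distinct eigenvalues are linearly independent but generally not orthogonal. Completing this step requires Schatten-class estimates for the relevant Birman--Schwinger operator; the adaptation to our setting relies on these estimates remaining valid in the boundary-trace framework of~\eqref{eq:form}, via the Poisson-extension reduction already in use elsewhere in the paper.
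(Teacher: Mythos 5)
The paper offers no proof of this theorem; it simply states that the result is obtained ``following the approach suggested in \cite{FLLS}.'' Your proposal is therefore, in effect, an attempt to reconstruct the FLLS argument and adapt it to the surface-potential setting, and it should be judged on its own correctness.

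Part~(i). Your starting identity $h_d(\re v)[u]=(\re\lambda)\|u\|^2$ (real part of the eigenvalue equation tested against the eigenfunction) is the right first move. But the passage from this single-eigenfunction identity to the summed bound is not accomplished by ``a Birman--Schwinger factorisation combined with the Weyl inequality for singular values of products of compact operators.'' Those tools control Schatten norms, not Riesz moments of real parts of eigenvalues. The actual mechanism in the FLLS-type argument is a majorization/Ky Fan step: take the Riesz projection onto the eigenvalues with $\re\lambda_j<0$, choose a Schur (upper-triangular) orthonormal basis $\{f_j\}$ of that finite-dimensional subspace with $\langle H(v)f_j,f_j\rangle=\lambda_j$ (this is also where generalized eigenfunctions and algebraic multiplicities are handled, a point your sketch ignores), deduce $\langle H(\re v)f_j,f_j\rangle=\re\lambda_j$, and then invoke the fact that the diagonal of a self-adjoint operator on an orthonormal system is majorized by its eigenvalue sequence. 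Convexity of $t\mapsto(-t)_+^\gamma$ is exactly what requires $\gamma\ge1$. Your proposal gives no indication of why $\gamma\ge1$ is needed, which signals that the mechanism you describe is not the one being used.

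Part~(ii). There is a definite error. You write ``If $\re\lambda\le 0$, then the elementary geometric inequality $|\lambda|\le(1+1/\kappa)(-\re\lambda)$ reduces the bound to part~(i).'' This inequality is false: the set $\{|\im z|\ge\kappa\re z\}$ contains the entire closed left half-plane, where $\im\lambda$ is completely unconstrained. Taking $\lambda=-\epsilon+iM$ with $\epsilon\to0$ and $M\to\infty$ gives $|\lambda|\approx M$ while $-\re\lambda=\epsilon$, so no inequality of the claimed form can hold. Consequently part~(i), which only controls $\sum(-\re\lambda_j)^\gamma$, cannot by itself control $\sum|\lambda_j|^\gamma$ for eigenvalues near the imaginary axis; one must separately control $\im\lambda_j$, which requires input from $\im v$ (via sectoriality of the form, or a Birman--Schwinger estimate involving the full modulus $|v|$), and your sketch provides none. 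Your second sub-case has a sign problem as well: with $z_0<0$ and $\re\lambda>0$ one has $\re(\lambda-z_0)>0$, so the quantity $-\re(\lambda-z_0)$ in the right-hand side of your claimed estimate is negative and the inequality is vacuous. Finally, you do not address a genuine structural difficulty of the surface-potential setting that is absent in FLLS: replacing $v$ by $v+c$ does \emph{not} shift the eigenvalues of $H(v)$ by $c$ (unlike $-\Delta+V\mapsto-\Delta+V+c$), so whatever shift of the spectral parameter the argument uses must be done at the level of $H(v)+\tau$ and tracked through the duality of Theorem~\ref{duality}. As written, the proposal's part~(ii) does not constitute a proof.
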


\subsubsection{Waveguides}
Let $\omega\subset\R^d$ be a domain of finite measure and put $\Omega:=\omega\times\R_+$ and $\Gamma:=(\partial\omega)\times\R_+$. The quadratic form \eqref{eq:form}, restricted to $\{u\in H^1(\Omega): \ u=0 \text{ on } \Gamma \}$, defines a self-adjoint operator $H_\omega(v)$ in $L_2(\Omega)$, which corresponds to Dirichlet boundary conditions on $\Gamma$ and boundary conditions of the third type on $\omega\times\{0\}$. By the variational principle Theorem \ref{main1} implies that
\begin{equation*}
  \tr\left[ H_\omega(v) \right]_-^\gamma
  \leq S_{\gamma,d} \int_\omega v(x)_-^{2\gamma+d} \,dx
\end{equation*}
for $\gamma>0$ if $d=1$ and $\gamma\geq 0$ if $d\geq 2$ with the constant $S_{\gamma,d}$ from \eqref{eq:main1}. In particular, $S_{\gamma,d}=L_{\gamma,d}^\cl$ for $\gamma\geq 3/2$. We now show that in the special case where $v\equiv v_0$ is a constant, the estimate with the semi-classical constant holds already for $\gamma\geq 1$.

\begin{theorem}
  Let $\omega\subset\R^d$ be a domain of finite measure and $v\equiv v_0>0$ a constant. Then for any $\gamma\geq 1$
  \begin{equation}\label{eq:waveguide}
    \tr\left[ H_\omega(v) \right]_-^\gamma
    \leq L_{\gamma,d}^\cl\, |\omega|\, v_0^{2\gamma+d}\, .
  \end{equation}
\end{theorem}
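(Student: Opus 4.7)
The strategy is to exploit separation of variables, which is available because $v\equiv v_0$ is constant in $x$ and the cross-section $\omega$ is $y$-independent. Let $\{\phi_k\}_{k\geq 1}$ be an $L_2(\omega)$-orthonormal basis of eigenfunctions of the Dirichlet Laplacian $-\Delta_\omega^D$ on $\omega$, with eigenvalues $0<\mu_1\leq\mu_2\leq\cdots\to\infty$. Writing any $u$ in the form domain of $H_\omega(v_0)$ as $u(x,y)=\sum_k f_k(y)\phi_k(x)$ with $f_k\in H^1(\R_+)$, Parseval's identity rewrites the quadratic form as
$$\iint_\Omega|\nabla u|^2\,dx\,dy - v_0\int_\omega|u(x,0)|^2\,dx = \sum_{k\geq 1}\left(\int_0^\infty\left(|f_k'(y)|^2 + \mu_k |f_k(y)|^2\right)dy - v_0|f_k(0)|^2\right).$$
Hence $H_\omega(v_0)$ is unitarily equivalent to the orthogonal sum $\bigoplus_{k\geq 1}(H_0(v_0)+\mu_k)$, with $H_0(v_0)$ as in the proof of Theorem~\ref{ltsharp}.

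As noted there, $H_0(v_0)$ has a unique negative eigenvalue $-v_0^2$, so each fiber $H_0(v_0)+\mu_k$ has essential spectrum $[\mu_k,\infty)$ and exactly one isolated eigenvalue $\mu_k-v_0^2$. Summing over $k$ yields the identity
$$\tr\left[H_\omega(v_0)\right]_-^\gamma = \sum_{k:\,\mu_k<v_0^2}(v_0^2-\mu_k)^\gamma = \tr\left(v_0^2-(-\Delta_\omega^D)\right)_+^\gamma.$$

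The proof is then completed by invoking the Berezin--Li--Yau inequality: for every $\gamma\geq 1$ and every $\Lambda>0$,
$$\tr\left(\Lambda-(-\Delta_\omega^D)\right)_+^\gamma \leq L_{\gamma,d}^\cl\,|\omega|\,\Lambda^{\gamma+d/2},$$
applied with $\Lambda=v_0^2$. The assumption $\gamma\geq 1$ enters only through this last bound, whose semi-classical constant is sharp precisely in that range. There is accordingly no serious obstacle: the main content of the argument is the reduction from the waveguide geometry to a problem on $\omega$ via separation of variables, after which the statement follows from a classical sharp spectral inequality for the Dirichlet Laplacian.
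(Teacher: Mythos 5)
Your proof is correct and follows exactly the approach of the paper: separation of variables reduces $\tr[H_\omega(v_0)]_-^\gamma$ to $\tr[-\Delta_\omega^D - v_0^2]_-^\gamma$, and Berezin--Li--Yau finishes it. You have merely spelled out the details of the separation-of-variables step (the orthogonal decomposition into fibers $H_0(v_0)+\mu_k$, each contributing the single negative eigenvalue $\mu_k - v_0^2$ when $\mu_k < v_0^2$) which the paper leaves implicit.
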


\begin{proof}
  By separation of variables one has
  $$
  \tr\left[ H_\omega(v) \right]_-^\gamma 
  = \tr\left[-\Delta_\omega^D -v_0^2\right]_-^\gamma\,,$$
  where $-\Delta_\omega^D$ denotes the Dirichlet Laplacian on $\omega$. Therefore the assertion follows from the Berezin-Li-Yau inequality; see \cite{B,LY} and also \cite{La}.
\end{proof}

The same argument shows that if $\omega$ is tiling (in particular, \emph{any} interval $\omega$ if $d=1$), then \eqref{eq:waveguide} holds for all $\gamma\geq 0$; see \cite{P}.


\section{Relativistic Schr\"odinger operators}

\subsection{Statement of the results}
In this section we derive a connection between Schr\"o\-din\-ger operators $H(v)$ with surface potential in $L_2(\R_+^{d+1})$ and relativistic Schr\"odin\-ger operators $\sqrt{-\Delta}-v$ in $L_2(\R^{d})$. We begin by recalling Lieb-Thirring and Cwikel-Lieb-Rozenblum inequalities for the latter operator.

\begin{proposition}\label{daubechies}
  The inequality 
  \begin{equation}\label{eq:daubechies}
    \tr\left[ \sqrt{-\Delta}-v \right]_-^\gamma
    \leq D_{\gamma,d} \int_{\R^d} v(x)_+^{\gamma+d} \,dx
  \end{equation}
  holds for all $0\leq v\in L_{\gamma+d}(\R^d)$ iff
  \begin{equation}
    \label{eq:daubechiesass}
    \gamma>0 \quad\text{if } d=1,
    \qquad \text{and} \qquad
    \gamma\geq 0 \quad\text{if } d\geq 2 \, .
  \end{equation}
\end{proposition}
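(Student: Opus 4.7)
The plan is to deduce the proposition from Theorem~\ref{main1} by a Birman-Schwinger comparison, exploiting the fact that the trace-compressed resolvent of the Neumann Laplacian on the half-space equals $(\sqrt{-\Delta+E})^{-1}$ on the boundary -- the structural link between $H(v)$ and $\sqrt{-\Delta}-v$ hinted at in the introduction.

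For the sufficiency of \eqref{eq:daubechiesass}, I would begin with the identity $T(H_0+E)^{-1}T^*=(\sqrt{-\Delta+E})^{-1}$ on $L_2(\R^d)$, where $H_0$ denotes the Neumann Laplacian on $\R^{d+1}_+$ and $T$ is the boundary trace; this is immediate from the explicit formula $u(x,y)=e^{-y\sqrt{-\Delta_x+E}}(\sqrt{-\Delta+E})^{-1}g(x)$ for the solution of $(H_0+E)u=T^*g$. The Birman-Schwinger principle then gives
\[
N(H(v)<-E)=N\bigl(v^{1/2}(\sqrt{-\Delta+E})^{-1}v^{1/2}>1\bigr)
\]
and
\[
N(\sqrt{-\Delta}-v<-\mu)=N\bigl(v^{1/2}(\sqrt{-\Delta}+\mu)^{-1}v^{1/2}>1\bigr).
\]
The elementary inequality $\sqrt{-\Delta+E}\le\sqrt{-\Delta}+\sqrt{E}$ for commuting non-negative operators yields the operator ordering $(\sqrt{-\Delta+E})^{-1}\ge(\sqrt{-\Delta}+\sqrt{E})^{-1}$, whence $N(\sqrt{-\Delta}-v<-\mu)\le N(H(v)<-\mu^2)$. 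Integrating against $\gamma\mu^{\gamma-1}\,d\mu$ and substituting $E=\mu^2$ I would obtain
\[
\tr[\sqrt{-\Delta}-v]_-^\gamma \le \tr[H(v)]_-^{\gamma/2}.
\]
Applying Theorem~\ref{main1} with exponent $\gamma/2$ then delivers \eqref{eq:daubechies} with $D_{\gamma,d}\le S_{\gamma/2,d}$, and the admissible range for the index $\gamma/2$ in \eqref{eq:ltsurfass} is precisely \eqref{eq:daubechiesass}.

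For the necessity, only the case $\gamma=0$, $d=1$ needs attention. I would construct, for every $n$, a non-negative $v_n$ on $\R$ with $\int v_n\,dx=1/n$ but at least $n$ negative eigenvalues of $\sqrt{-\Delta}-v_n$. The key fact is that $\sqrt{-\Delta}-w$ on $\R$ has a negative eigenvalue whenever $w\ge 0$ is non-trivial: for the dilated trial function $f_\lambda(x)=f(x/\lambda)$, the kinetic energy satisfies $\langle f_\lambda,\sqrt{-\Delta}f_\lambda\rangle=\lambda^{-1}\langle f,\sqrt{-\Delta}f\rangle$ while $\int w|f_\lambda|^2\,dx\to|f(0)|^2\int w$ as $\lambda\to\infty$, so the form becomes negative. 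Placing $n$ widely separated translates of a single bump of integral $1/n^2$ and using $n$ orthogonal trial functions centered at the respective bumps, the min-max principle gives at least $n$ negative eigenvalues while $\int v_n=1/n\to 0$, ruling out any bound $N\le D\int v$.

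I expect this necessity step to be the main obstacle: because $\sqrt{-\Delta}$ is non-local, the translates of the trial function have non-zero cross-terms in the kinetic form, and I would need the separation of the bumps to be large compared with $\lambda$, using the polynomial decay of the Poisson kernel $y/\pi(x^2+y^2)$, so that these off-diagonal contributions remain small relative to the (negative) diagonal entries. The sufficiency direction, by contrast, is essentially immediate given the identity for $T(H_0+E)^{-1}T^*$ and Theorem~\ref{main1}.
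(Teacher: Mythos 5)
Your sufficiency argument is circular. The paper proves Theorem~\ref{main1} in Subsection~\ref{sec:main1proof} precisely \emph{as} ``an immediate consequence of Proposition~\ref{daubechies} and Theorem~\ref{duality}''; Proposition~\ref{daubechies} is a logical \emph{input} for Theorem~\ref{main1}, not a corollary of it, and the paper simply cites Daubechies \cite{D} for the sufficiency. Your Poisson-extension identity $T(H_0+E)^{-1}T^*=(\sqrt{-\Delta+E})^{-1}$, the resulting Birman--Schwinger comparison, and the operator inequality $\sqrt{-\Delta+E}\le\sqrt{-\Delta}+\sqrt E$ are exactly the content of Theorem~\ref{duality} and the first inequality of \eqref{eq:dualitymoments}; what you have written is the duality used in reverse, and it does give $D_{\gamma,d}\le S_{\gamma/2,d}$ (this is the second inequality of \eqref{eq:dualitymoments2}). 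But the parts of Theorem~\ref{main1} that the paper proves independently of Proposition~\ref{daubechies} --- via the lifting method of Subsection~\ref{sec:main2proof} and the variants listed in Remark~\ref{constants} --- only cover $\gamma\ge 1/2$ for all $d$ and $\gamma\ge 0$ for $d\ge 3$. Feeding those ranges through your $D_{\gamma,d}\le S_{\gamma/2,d}$ yields Proposition~\ref{daubechies} only for $\gamma\ge 1$ (all $d$) and $\gamma\ge 0$ ($d\ge 3$), and leaves uncovered precisely the delicate near-threshold ranges $0<\gamma<1$ when $d=1$ and $0\le\gamma<1$ when $d=2$. So your sufficiency proof is either circular or incomplete.

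Your necessity argument also has a concrete error. For $f_\lambda(x)=f(x/\lambda)$ in $d=1$ one computes, by Plancherel and the scaling of the Fourier transform, that $\langle f_\lambda,\sqrt{-\Delta}\,f_\lambda\rangle = \langle f,\sqrt{-\Delta}\,f\rangle$, \emph{not} $\lambda^{-1}\langle f,\sqrt{-\Delta}\,f\rangle$: the $\sqrt{-\Delta}$-form in one dimension is dilation-invariant (the exponent $\lambda^{d-2s}$ equals $\lambda^0$ for $d=1$, $s=1/2$). You seem to have used the scaling appropriate to $-\Delta$ in $d=1$. Consequently the simple dilation does not drive the form negative --- this case is scale-critical, exactly like $-\Delta$ in $d=2$, which is why the paper defers to \cite[Prop.~7.4]{S}, where a logarithmically-corrected trial function is used. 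Finally, the $n$-bump construction with separation estimates is unnecessary for the necessity of $\gamma>0$ at $d=1$: once one knows that $\sqrt{-\Delta}-v$ has at least one negative eigenvalue for every non-trivial $v\ge 0$, a single bump of arbitrarily small integral already makes $\tr[\sqrt{-\Delta}-v]_-^0\ge 1$ while $\int v\,dx\to 0$, which defeats any bound of the form $N\le D_{0,1}\int v\,dx$.
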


This result is due to Daubechies \cite{D}. The fact that the inequality is not valid for $\gamma=0$ if $d=1$ follows from the fact that $\sqrt{-\Delta}-v$ has a negative eigenvalue for any non-trivial $v\geq 0$ if $d=1$. This can be proved as in \cite[Prop. 7.4]{S}.

The Weyl-type asymptotics in the relativistic case read
\begin{equation}\label{eq:weylrel}
  \lim_{\alpha\to\infty}
  \alpha^{-\gamma-d}\tr\left[ \sqrt{-\Delta} - \alpha\, v \right]_-^\gamma
  = D_{\gamma,d}^\cl \int_{\R^d} v(x)_+^{\gamma+d}\,dx\, ,
\end{equation}
with
\begin{equation}
  \label{eq:relclass}
  D_{\gamma,d}^\cl 
  := 2^{-d} \pi^{-d/2} \frac{\Gamma(\gamma+1)\,\Gamma(d+1)}{\Gamma(\gamma+d+1)\,\Gamma(d/2+1)}\, .
\end{equation}
We denote by $D_{\gamma,d}$ the \emph{sharp} constant in \eqref{eq:daubechies}. In the case $d=3$, the bound
\begin{equation}
  \label{eq:daubechiesconst3}
  D_{\gamma,3}\leq 6.08\, D_{\gamma,3}^\cl\, ,
  \quad\gamma\geq 0\, ,
\end{equation}
is contained in \cite{D}. Similarly one proves that for $d=2$
\begin{equation}
  \label{eq:daubechiesconst2}
  D_{\gamma,2}\leq 6.04\, D_{\gamma,2}^\cl\, ,
  \quad\gamma\geq 0\, .
\end{equation}

We are now in position to state a result which connects relativistic Schr\"odinger operator and non-relativistic Schr\"odinger operators with surface potentials. As usual, we denote by $N(-\tau,T)$ the number of eigenvalues, counting multiplicities, less than $-\tau$ of a self-adjoint, lower semi-bounded operator $T$, and write $N(T):=N(0,T)$.

\begin{theorem}\label{duality}
  Assume that $v$ is form-compact with respect to $\sqrt{-\Delta}$. Then for any $\tau\geq 0$ one has 
  \begin{equation}
    \label{eq:dualitynumber}
    N(-\tau,H(v))=N(\sqrt{-\Delta+\tau}-v)\, .
  \end{equation}
  Moreover, for any $\gamma>0$ and $0<\rho<1$ one has
  \begin{equation}\label{eq:dualitymoments}
    \tr\left[\sqrt{-\Delta}-v\right]_-^\gamma
    \leq \tr\left[H(v)\right]_-^{\gamma/2}
    \leq \left(\frac\rho{\sqrt{1-\rho^2}}\right)^\gamma 
    \tr\left[\sqrt{-\Delta}-\rho^{-1} v\right]_-^\gamma\, .
  \end{equation}
\end{theorem}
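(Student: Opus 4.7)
My approach builds on a Poisson-like extension that transforms the surface form on $\R^{d+1}_+$ into a quadratic form on $\R^d$. The key identity is that for any $\phi\in H^{1/2}(\R^d)$ and any $\tau\geq 0$, the infimum of $\iint_{\R^{d+1}_+}(|\nabla u|^2+\tau|u|^2)\,dx\,dy$ over $u\in H^1(\R^{d+1}_+)$ with trace $u|_{y=0}=\phi$ equals $\bigl\langle\sqrt{-\Delta+\tau}\,\phi,\phi\bigr\rangle_{L_2(\R^d)}$ and is attained by $u_\phi(x,y)=(e^{-y\sqrt{-\Delta+\tau}}\phi)(x)$. I would check this by a short integration by parts, noting that $u_\phi$ solves $(-\Delta+\tau)u_\phi=0$ with $-\partial_y u_\phi|_{y=0}=\sqrt{-\Delta+\tau}\,\phi$; minimality follows since $u-u_\phi\in H^1_0(\R^{d+1}_+)$ is orthogonal to $u_\phi$ in the $\tau$-shifted Dirichlet form. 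Subtracting $\int v\,|\phi|^2\,dx$ from both sides yields
\[
  h(v)[u]+\tau\|u\|^2 \,\geq\, \bigl\langle(\sqrt{-\Delta+\tau}-v)\phi,\phi\bigr\rangle_{L_2(\R^d)},
\]
with equality when $u=u_\phi$.

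Given this sharp inequality, the counting identity in (i) follows from a two-sided variational argument. For ``$\leq$'', take a $k$-dimensional $V\subset H^1(\R^{d+1}_+)$ on which $H(v)+\tau$ is negative definite; the trace map $u\mapsto u(\cdot,0)$ is injective on $V$ because any nonzero element of its kernel would lie in $H^1_0(\R^{d+1}_+)$, where $H(v)+\tau=-\Delta+\tau\geq 0$, contradicting negativity on $V$. The inequality above then transfers the strict negativity to the $k$-dimensional image under the trace, so $N(\sqrt{-\Delta+\tau}-v)\geq k$. For ``$\geq$'', start with a $k$-dimensional subspace $W\subset H^{1/2}(\R^d)$ on which $\sqrt{-\Delta+\tau}-v$ is negative and take its Poisson extensions; the equality case supplies a $k$-dimensional negative subspace for $H(v)+\tau$.

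For the moment inequalities I would combine (i) with the distribution-function representation
\[
  \tr\bigl[H(v)\bigr]_-^{\gamma/2}=\gamma\int_0^\infty s^{\gamma-1}\,N(-s^2,H(v))\,ds=\gamma\int_0^\infty s^{\gamma-1}\,N\bigl(\sqrt{-\Delta+s^2}-v\bigr)\,ds,
\]
and two symbol-level operator inequalities. The elementary estimate $\sqrt{|\xi|^2+s^2}\leq|\xi|+s$ gives $\sqrt{-\Delta+s^2}-v\leq(\sqrt{-\Delta}-v)+s$, hence $N(\sqrt{-\Delta+s^2}-v)\geq N(-s,\sqrt{-\Delta}-v)$, which after integration yields the lower bound $\tr[\sqrt{-\Delta}-v]_-^\gamma\leq\tr[H(v)]_-^{\gamma/2}$. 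For the upper bound, the Cauchy--Schwarz identity $(\rho|\xi|+\sqrt{1-\rho^2}\,s)^2\leq|\xi|^2+s^2$ gives $\sqrt{|\xi|^2+s^2}\geq\rho|\xi|+s\sqrt{1-\rho^2}$ for $\rho\in(0,1)$, which lifts to
\[
  \sqrt{-\Delta+s^2}-v \,\geq\, \rho\bigl(\sqrt{-\Delta}-\rho^{-1}v\bigr)+s\sqrt{1-\rho^2},
\]
so $N(\sqrt{-\Delta+s^2}-v)\leq N\bigl(-s\sqrt{1-\rho^2}/\rho,\,\sqrt{-\Delta}-\rho^{-1}v\bigr)$. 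Substituting $u=s\sqrt{1-\rho^2}/\rho$ in the distribution-function integral produces exactly the constant $(\rho/\sqrt{1-\rho^2})^\gamma$.

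The main subtlety I anticipate is the clean justification of the Poisson-extension minimisation and of the trace-injectivity step at the level of quadratic forms under only form-compactness of $v$; once these are in place, part (ii) reduces to the few lines of computation sketched above.
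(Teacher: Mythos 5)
Your proof of the counting identity \eqref{eq:dualitynumber} takes a genuinely different route from the paper's, while your treatment of the moment bounds \eqref{eq:dualitymoments} is essentially identical. The paper proves \eqref{eq:dualitynumber} via Lemma~\ref{dualitylemma}, an explicit eigenfunction correspondence (the function $u(x,y)=(e^{-y\sqrt{-\Delta+\tau}}f)(x)$ is an eigenfunction of $H(v)$ at energy $-\tau$ iff $f\in\ker(\sqrt{-\Delta+\tau}-v)$), combined with a Birman--Schwinger-style monotonicity count: since the eigenvalues of $\sqrt{-\Delta+t}-v$ increase in $t$, the set $\{t>\tau:\ 0\in\sigma_p(\sqrt{-\Delta+t}-v)\}$ has cardinality $N(\sqrt{-\Delta+\tau}-v)$, and the lemma identifies it with the set of eigenvalues of $H(v)$ below $-\tau$. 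You instead establish the sharp form inequality $h(v)[u]+\tau\|u\|^2\geq\langle(\sqrt{-\Delta+\tau}-v)u(\cdot,0),u(\cdot,0)\rangle$, with equality at the Poisson extension, and run a two-sided min-max argument (trace map is injective on any negative subspace, Poisson extension preserves negativity). This variational argument is self-contained, avoids the eigenfunction lemma and the monotonicity observation entirely, and is arguably cleaner for readers who think in terms of quadratic forms; the paper's argument is more structural (it identifies the eigenfunction correspondence exactly, which is itself of interest and reused in Remark~\ref{dualityabstract}). One technical point you flag correctly: the Poisson extension $e^{-y\sqrt{-\Delta+\tau}}\phi$ belongs to $L_2(\R^{d+1}_+)$ only for $\tau>0$ when $\phi\in H^{1/2}(\R^d)$, so the $\tau=0$ case of your ``$\geq$'' direction needs a limiting argument $\tau\downarrow 0$; the paper's proof sidesteps this because the set $\{t>\tau\}$ involves only $t>0$ even when $\tau=0$. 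Part (ii) of your argument -- the substitution $\tau=s^2$ in the layer-cake formula, the elementary symbol inequalities $\rho\sqrt\lambda+\sqrt{1-\rho^2}\sqrt\tau\leq\sqrt{\lambda+\tau}\leq\sqrt\lambda+\sqrt\tau$, and the change of variables to produce $(\rho/\sqrt{1-\rho^2})^\gamma$ -- coincides with the paper's computation in \eqref{eq:dualityint} and the lines following.
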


We shall prove this in Subsection \ref{sec:duality} below, as well as the following

\begin{corollary}\label{dualityconstants}
  The sharp constants in \eqref{eq:main1} and \eqref{eq:daubechies} coincide for $\gamma=0$ and $d\geq 2$,
  \begin{equation}\label{eq:dualitynumber2}
  S_{0,d} = D_{0,d} \, ,
  \end{equation}
  and satisfy for any $\gamma>0$ and $d\geq 1$
  \begin{equation}\label{eq:dualitymoments2}
  \frac{\gamma^{\gamma/2}\, d^{d/2}}{(\gamma+d)^{(\gamma+d)/2}}\, 
  S_{\gamma/2,d}
  \leq D_{\gamma,d} \leq S_{\gamma/2,d}\, .
  \end{equation}
\end{corollary}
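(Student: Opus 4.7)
The plan is to read off both statements directly from Theorem \ref{duality} by combining its identities with the definitions of the sharp constants $S_{\gamma/2,d}$ and $D_{\gamma,d}$. For \eqref{eq:dualitynumber2} I would specialize \eqref{eq:dualitynumber} to $\tau=0$, which gives $N(H(v))=N(\sqrt{-\Delta}-v)$ for every admissible $v$. Since the right-hand sides of \eqref{eq:main1} at $\gamma=0$ and of \eqref{eq:daubechies} at $\gamma=0$ are literally the same integral $\int v_+^{d}\,dx$, the least admissible constants in the two inequalities must coincide, yielding $S_{0,d}=D_{0,d}$.

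For $\gamma>0$, the upper bound in \eqref{eq:dualitymoments2} follows by chaining the first inequality of \eqref{eq:dualitymoments} with \eqref{eq:main1} at exponent $\gamma/2$ (so that $2(\gamma/2)+d=\gamma+d$):
\[
\tr\left[\sqrt{-\Delta}-v\right]_-^\gamma \leq \tr\left[H(v)\right]_-^{\gamma/2} \leq S_{\gamma/2,d}\int_{\R^d} v_+^{\gamma+d}\,dx,
\]
so $D_{\gamma,d}\leq S_{\gamma/2,d}$.

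For the lower bound, I would use the second inequality of \eqref{eq:dualitymoments} together with \eqref{eq:daubechies} applied to the rescaled potential $\rho^{-1}v$, obtaining
\[
\tr\left[H(v)\right]_-^{\gamma/2} \leq \left(\frac{\rho}{\sqrt{1-\rho^2}}\right)^{\!\gamma} D_{\gamma,d}\,\rho^{-(\gamma+d)}\int_{\R^d} v_+^{\gamma+d}\,dx = \rho^{-d}(1-\rho^2)^{-\gamma/2}\, D_{\gamma,d}\int_{\R^d} v_+^{\gamma+d}\,dx.
\]
By sharpness of $S_{\gamma/2,d}$ this forces $\rho^{d}(1-\rho^2)^{\gamma/2}\, S_{\gamma/2,d}\leq D_{\gamma,d}$ for every $\rho\in(0,1)$. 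A short single-variable optimization, with critical point at $\rho^2=d/(\gamma+d)$, gives
\[
\max_{\rho\in(0,1)} \rho^d(1-\rho^2)^{\gamma/2}=\frac{\gamma^{\gamma/2}\,d^{d/2}}{(\gamma+d)^{(\gamma+d)/2}},
\]
which is exactly the factor appearing in \eqref{eq:dualitymoments2}.

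I do not anticipate any substantive obstacle once Theorem \ref{duality} is in hand: the corollary then reduces to tracking how the sharp constants scale under the two comparisons \eqref{eq:dualitynumber} and \eqref{eq:dualitymoments}, together with the elementary optimization in $\rho$ above. The only point requiring a brief sanity check is that the optimal $\rho=\sqrt{d/(\gamma+d)}$ lies in $(0,1)$, which holds for all $\gamma>0$ and $d\geq 1$.
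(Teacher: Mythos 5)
Your proposal is correct and follows the paper's own proof essentially verbatim: the $\gamma=0$ identity and the upper bound come from \eqref{eq:dualitynumber} and the first inequality in \eqref{eq:dualitymoments} combined with the definitions of the sharp constants, and the lower bound comes from the second inequality in \eqref{eq:dualitymoments} together with \eqref{eq:daubechies} applied to $\rho^{-1}v$, followed by optimization over $\rho\in(0,1)$ with the optimal value $\rho^2=d/(\gamma+d)$. The computation of the resulting constant $\gamma^{\gamma/2}d^{d/2}(\gamma+d)^{-(\gamma+d)/2}$ is also correct.
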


We shall use Theorem \ref{duality} in two directions. In Subsection \ref{sec:main1proof} we shall use the known Lieb-Thirring inequalities in the relativistic case to derive the Lieb-Thirring inequalities for surface potentials. In Subsection \ref{sec:bks} we shall use the estimates on the constants $S_{\gamma,d}$ for surface potentials to improve upon the estimates \eqref{eq:daubechiesconst3} and \eqref{eq:daubechiesconst2} in the relativistic case. We also discuss the connection of our inequality with an inequality by Birman, Koplienko and Solomyak.


\subsection{Duality}\label{sec:duality}

The following lemma characterizes the negative eigenvalues of the operator $H(v)$ as the values $-\tau$ for which $0$ is an eigenvalue of the operator $\sqrt{-\Delta+\tau}-v$.

\begin{lemma}\label{dualitylemma}
  Assume that $v$ is form-compact with respect to $\sqrt{-\Delta}$ and let $\tau> 0$.
  \begin{enumerate}
    \item
      Let $f\in\ker(\sqrt{-\Delta+\tau}-v)$ and define $u(x,y):=(\exp(-y\sqrt{-\Delta+\tau})f)(x)$. Then $u\in\ker(H(v)+\tau)$ and $u(x,0)=f(x)$.
    \item
      Let $u\in\ker(H(v)+\tau)$ and define $f(x):=u(x,0)$. Then $f\in\ker(\sqrt{-\Delta+\tau}-v)$ and $u(x,y)=(\exp(-y\sqrt{-\Delta+\tau})f)(x)$.
  \end{enumerate}
\end{lemma}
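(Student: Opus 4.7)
The proof naturally splits along the two stated parts, and rests on the Poisson semigroup associated with $\sqrt{-\Delta+\tau}$ on $\R^d$. The underlying observation, which drives both directions, is the identity
\[
    (-\Delta + \tau)\bigl[(e^{-y\sqrt{-\Delta+\tau}}f)(x)\bigr] = 0 \quad \text{in }\R^{d+1}_+,
\]
together with $-\partial_y (e^{-y\sqrt{-\Delta+\tau}}f)\big|_{y=0}=\sqrt{-\Delta+\tau}\, f$.

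For part (1), given $f\in\ker(\sqrt{-\Delta+\tau}-v)$, I would first verify that the Poisson extension $u(x,y) := (e^{-y\sqrt{-\Delta+\tau}}f)(x)$ belongs to $H^1(\R^{d+1}_+)$. This follows from the computations
\[
    \|u\|_{L_2}^2 = \tfrac{1}{2}\langle f,(-\Delta+\tau)^{-1/2}f\rangle,\quad
    \|\partial_y u\|_{L_2}^2 = \tfrac{1}{2}\langle f,\sqrt{-\Delta+\tau}\, f\rangle,
\]
and analogous bounds for $\nabla_x u$, which together require only $f\in H^{1/2}(\R^d)$; this is ensured because $f\in\dom\sqrt{-\Delta+\tau}$. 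Hence $u$ lies in the form domain of $H(v)$. The eigenvalue identity $(-\Delta+\tau)u=0$ holds classically in the open half-space, and the Robin condition is verified by
\[
    -\partial_y u(x,0) = \sqrt{-\Delta+\tau}\, f(x) = v(x)\,f(x),
\]
the last equality by assumption on $f$. Testing against arbitrary $\phi\in H^1(\R^{d+1}_+)$ and integrating by parts then shows that $u$ satisfies the weak form of $(H(v)+\tau)u=0$, which is exactly $u\in\ker(H(v)+\tau)$.

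For part (2), I would begin with $u\in\ker(H(v)+\tau)\subset H^1(\R^{d+1}_+)$ and set $f := u(\cdot,0)$, which by the trace theorem lies in $H^{1/2}(\R^d)$. Form the Poisson extension $\tilde u(x,y) := (e^{-y\sqrt{-\Delta+\tau}}f)(x)$ and set $w := u-\tilde u$. Then $w\in H^1(\R^{d+1}_+)$ with vanishing trace, i.e.\ $w\in H^1_0(\R^{d+1}_+)$. Testing the equation $(-\Delta+\tau)u = 0$ (which holds in $\R^{d+1}_+$ as a consequence of the weak eigenvalue equation restricted to test functions vanishing on the boundary) and the classical identity $(-\Delta+\tau)\tilde u=0$ against $\phi\in C_0^\infty(\R^{d+1}_+)$ gives, for all such $\phi$,
\[
    \iint_{\R^{d+1}_+}\bigl(\nabla w\cdot\overline{\nabla\phi}+\tau\, w\bar\phi\bigr)\,dx\,dy = 0.
\]
By density this extends to $\phi\in H^1_0(\R^{d+1}_+)$, and in particular to $\phi=w$ itself, yielding
\[
    \iint_{\R^{d+1}_+}\bigl(|\nabla w|^2+\tau|w|^2\bigr)\,dx\,dy = 0.
\]
Since $\tau>0$, this forces $w\equiv 0$, so $u=\tilde u$. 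Finally, to conclude $f\in\ker(\sqrt{-\Delta+\tau}-v)$, I would test the weak form of $(H(v)+\tau)u=0$ against general $\phi\in H^1(\R^{d+1}_+)$ and integrate the bulk term by parts using $(-\Delta+\tau)u=0$; the remaining boundary integral is
\[
    -\int_{\R^d}\bigl(\partial_y u(x,0)+v(x)f(x)\bigr)\overline{\phi(x,0)}\,dx = 0
\]
for all admissible boundary values $\phi(\cdot,0)$, hence $-\partial_y u(x,0)=v(x)f(x)$; combined with $-\partial_y u(x,0)=\sqrt{-\Delta+\tau}\, f(x)$ from the Poisson representation, this is the claimed kernel condition.

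The main technical point I would want to be careful about is the justification of the integration by parts in both directions, and in particular the fact that $\tilde u\in H^1(\R^{d+1}_+)$, which makes $w\in H^1_0$ and unlocks the coercivity argument. Everything else is essentially bookkeeping around the identity that $\partial_y$ acts as $-\sqrt{-\Delta+\tau}$ on the Poisson extension.
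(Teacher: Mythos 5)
The paper does not actually prove this lemma---it declares the proof ``straightforward'' and points to \cite{FS} for a similar argument---so there is nothing internal to compare against, but your proof is correct and is precisely the Poisson-extension argument the authors have in mind. One minor imprecision: the assertion that $f\in\dom\sqrt{-\Delta+\tau}$ is unwarranted (the form sum $\sqrt{-\Delta+\tau}-v$ need not have operator domain contained in $H^1$); what one actually knows is that $f$ lies in the \emph{form} domain $H^{1/2}(\R^d)$, which is exactly what your $L_2$ and gradient computations for $u$ require, and which also means the Robin condition $-\partial_y u(\cdot,0)=vf=\sqrt{-\Delta+\tau}\,f$ should be read as the form identity $\langle(-\Delta+\tau)^{1/4}f,(-\Delta+\tau)^{1/4}g\rangle=\langle vf,g\rangle$ for all $g\in H^{1/2}$, which is precisely what ``$f\in\ker(\sqrt{-\Delta+\tau}-v)$'' means for a form-sum operator; your Fourier integration by parts produces exactly this identity, so the argument goes through.
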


The proof of this lemma is straightforward and will be omitted (see \cite{FS} for a similar argument). Using a modification of the Birman-Schwinger principle we now give the

\begin{proof}[Proof of Theorem \ref{duality}]
  Since the eigenvalues of the operators $\sqrt{-\Delta+t}-v$ are increasing with respect to $t$, one has for any fixed $\tau\geq 0$
  $$
  N(\sqrt{-\Delta+\tau}-v) = \#_\textrm{m} \{ t>\tau:\, 0 \text{ is an eigenvalue of } 
\sqrt{-\Delta+t}-v \} \, .
  $$
  Here $\#_\textrm{m}\{\ldots\}$ means that the cardinality of $\{\ldots\}$ is determined according to multiplicities. By Lemma \ref{dualitylemma}, the right hand side coincides with
  $$
  \#_\textrm{m} \{ t>\tau:\, -t \text{ is an eigenvalue of } H(v) \}
  = N(-\tau,H(v))\, ,
  $$
  as claimed.

  To prove \eqref{eq:dualitymoments} we note that by the previous argument
  \begin{equation}\label{eq:dualityint}
    \tr\left[H(v)\right]_-^{\gamma/2} 
    = \frac\gamma2 \int_0^\infty N(-\tau, H(v))\, \tau^{\gamma/2 -1}\,d\tau
    = \frac\gamma2 \int_0^\infty N(\sqrt{-\Delta+\tau}-v) \,
    \tau^{\gamma/2 -1}\,d\tau\, .
  \end{equation}
  The elementary inequalities
  \begin{equation*}
    \rho\sqrt\lambda +\sqrt{1-\rho^2}\sqrt\tau
    \leq\sqrt{\lambda+\tau}
    \leq\sqrt\lambda + \sqrt\tau,
    \quad \lambda,\tau>0,\, 0<\rho<1\, ,
  \end{equation*}
  imply
  \begin{equation*}
    N(-\sqrt\tau,\sqrt{-\Delta} -v)
    \leq N(\sqrt{-\Delta+\tau}-v) 
    \leq N(-\sqrt{1-\rho^2}\sqrt\tau,\rho\sqrt{-\Delta} -v)\, .
  \end{equation*}
  Plugging this into \eqref{eq:dualityint} we obtain \eqref{eq:dualitymoments}.
\end{proof}

\begin{proof}[Proof of Corollary \ref{dualityconstants}]
  Equality \eqref{eq:dualitynumber2} as well as the second inequality in \eqref{eq:dualitymoments2} follow immediately from equality \eqref{eq:dualitynumber} and the first inequality in \eqref{eq:dualitymoments}. To prove the first inequality in \eqref{eq:dualitymoments2} we combine Daubechies' inequality \eqref{eq:daubechies} with the second inequality in \eqref{eq:dualitymoments} to get
  $$
  \tr\left[H(v)\right]_-^{\gamma/2}
  \leq D_{\gamma,d} \left(\frac\rho{\sqrt{1-\rho^2}}\right)^\gamma \rho^{-\gamma-d} \int_{\R^d} v(x)_+^{\gamma+d}\,dx \, .
  $$
  The assertion follows by optimizing over $0<\rho<1$.
\end{proof}

\begin{remark}\label{dualityabstract}
	The material in this subsection, except for the proof of the second part of Corollary \ref{dualityconstants}, is of abstract nature. If $A$ is a non-negative operator in a Hilbert space $\mathfrak H$ and $B$ is a self-adjoint operator which is relatively form-compact with respect to $A$, define the operator $H$ in $L_2(\R_+,\mathfrak H)$ by the quadratic form
	$$
		\int_0^\infty \left( \|F'(y)\|_\mathfrak H^2 + \|A F(y)\|_\mathfrak H^2 \right)\,dy - b[F(0)]
	$$
	for $F\in H^1(\R_+,\mathfrak H)\cap L_2(\R_+,\dom A)$. Here $b$ is the quadratic form of $B$. Then the argument of this subsection yields
	$$
		N(-\tau,H)=N(\sqrt{A^2+\tau}-B)\,.
	$$
	As an application of this generalization one can extend Theorem \ref{duality} to relativistic Schr\"odin\-ger operators with magnetic field or to relativistic Schr\"odinger operators with a Hardy weight subtracted (see \cite{FLS1}).
\end{remark}


\subsection{Proof of Theorem \ref{main1}}\label{sec:main1proof}

Theorem \ref{main1} is an immediate consequence of Proposition~\ref{daubechies} and Theorem \ref{duality}. 

Moreover, Daubechies' bounds \eqref{eq:daubechiesconst2}, \eqref{eq:daubechiesconst3} yield the upper bounds \eqref{eq:constantsnumber2}, \eqref{eq:constantsnumber3} for the sharp constants $S_{0,d}$. Similarly, the lower bounds \eqref{eq:constantsnumberlower2}, \eqref{eq:constantsnumberlower3} follow from
\begin{equation}
\label{eq:daubechieslower}
D_{0,d} 
\geq  \frac{2^{d-1}}{(d-1)^d}\, \Gamma\left(d +1\right) D_{0,d}^\cl\, ,
\qquad d\geq 2.
\end{equation}
(Note that this is only useful for $d\leq 7$, since otherwise the factor on the right hand side is smaller than one and the bound $D_{0,d} \geq D_{0,d}^\cl$ follows from \eqref{eq:weylrel}.) The lower bound \eqref{eq:daubechieslower} can be seen as follows. The definition of $D_{0,d}$ implies that if $\int v^d\,dx < D_{0,d}^{-1}$, then $\sqrt{-\Delta}-v$ is a non-negative operator. Hence
$$
\int_{\R^d} v |u|^2\,dx \leq \int_{\R^d} |(-\Delta)^{1/4}u|^2 \,dx
$$
for all $u\in H^{1/2}(\R^d)$. Choosing $v=\alpha |u|^{2/(d-1)}$ with $\alpha$ such that $\alpha^d \int |u|^{2d/(d-1)}\,dx = (D_{0,d}+\epsilon)^{-1}$ and letting $\epsilon$ tend to zero, we find
\begin{equation*}
D_{0,d}^{-1/d} \left(\int_{\R^d} |u|^{2d/(d-1)}\,dx \right)^{(d-1)/d}
\leq \int_{\R^d} |(-\Delta)^{1/4} u|^2\,dx
\end{equation*}
for all $u\in H^{1/2}(\R^d)$. Hence $D_{0,d}^{-1/d}$ is not larger than the constant in the sharp Sobolev inequality
\begin{equation}\label{eq:sobolevrel}
S'_d\,  \|u\|_{2d/(d-1)}^2 \leq \left\|(-\Delta)^{1/4} u\right\|^2,
\qquad
S'_d := \frac{d-1}2\, 2^{1/d}\, \pi^{(d+1)/2d}\, \Gamma\left(\frac{d+1}2\right)^{-1/d}.
\end{equation}
see \cite[Thm. 8.4]{LL}. Recalling definition \eqref{eq:relclass} of $D_{\gamma,d}^\cl$ we arrive at \eqref{eq:daubechieslower}.

\begin{remark}
	Instead of using the `relativistic' Sobolev inequality \eqref{eq:sobolevrel} to prove \eqref{eq:daubechieslower} we could have used a similar argument based on the sharp Sobolev trace inequality
	\begin{equation}\label{eq:sobolevtrace}
		S'_d\left( \int_{\R^d} |u|^{2d/(d-1)}\,dx\right)^{(d-1)/d} \leq \iint_{\R^{d+1}_+} |\nabla u|^2\,dx\,dy\,,
	\end{equation}
	to directly prove \eqref{eq:constantsnumberlower2}, \eqref{eq:constantsnumberlower3}. The constant $S'_d$ in \eqref{eq:sobolevtrace} is the same as in \eqref{eq:sobolevrel}, see \cite{E}. Indeed, an argument similar to our Lemma \ref{dualitylemma} was used in \cite{CL} to derive \eqref{eq:sobolevtrace} from \eqref{eq:sobolevrel}.
\end{remark}

\begin{remark}
  \label{weyl}
  Weyl-type asymptotics \eqref{eq:weylsurf} can be proved by a bracketing argument, dividing $\R^{d+1}_+$ into domains $Q\times\R_+$ with $Q\subset\R^d$ a small cube. An alternative proof can be based on Theorem \ref{duality}. Indeed, for $\gamma=0$ the asymptotics \eqref{eq:weylsurf} follow immediately from Theorem~\ref{duality} and \eqref{eq:weylrel} (which is valid for all smooth $v$). If $\gamma>0$ we write as in \eqref{eq:dualityint}
  $$
  	\tr\left[ H(\alpha\, v) \right]_-^\gamma 
  	= \gamma \int_0^\infty N(-\tau, H(\alpha\, v))\, \tau^{\gamma-1}\,d\tau
  	= \gamma \int_0^\infty N(\sqrt{-\Delta+\tau}- \alpha\, v)\, \tau^{\gamma-1}\,d\tau\, .
  $$
  For smooth $v$ one can justify that this is asymptotically equal as $\alpha\to\infty$ to
  \begin{align*}
  	& \gamma \int_0^\infty \iint_{\{(x,\xi)\in \R^d\times\R^d :\ \sqrt{|\xi|^2+\tau}- \alpha\, v(x)<0 \}} \frac{dx\,d\xi}{(2\pi)^d} \, \tau^{\gamma-1}\,d\tau \\
  	& \qquad = \gamma L_{0,d}^\cl \int_0^\infty \int_{\R^d} \left((\alpha v(x))_+^2 - \tau\right)_+^{d/2} \,dx \, \tau^{\gamma-1}\,d\tau \\
  	& \qquad = \alpha^{2\gamma+d} L_{\gamma,d}^\cl \int_{\R^d} v(x)_+^{2\gamma+d} \,dx \,.
	\end{align*}
	This concludes the sketch of \eqref{eq:weylsurf}. We note that by a standard argument based on Theorem~\ref{main1}, the asymptotics \eqref{eq:weylsurf} extend to all $v$ for which the right hand side is finite if $\gamma>0$ and $d=1$ or if $\gamma\geq 0$ and $d\geq 2$.
\end{remark}


\subsection{Relation with the BKS inequality}\label{sec:bks}

In this subsection we shall use Theorem \ref{duality} to improve upon known constants for relativistic Schr\"odinger operators and discuss its relation with an inequality by Birman, Koplienko and Solomyak. We begin with a result about massive relativistic Schr\"odinger operators.

\begin{remark}\label{massive}
Let $d\geq 3$ and $m\geq 0$. Then
\begin{equation}\label{eq:massive}
N(\sqrt{-\Delta+m^2}-m -v) \leq 10.332\, D_{0,d}^{\cl} \int_{\R^d} \left((v(x)+m)_+^2-m^2\right)_+^{d/2}\,dx\,.
\end{equation}
This improves upon Daubechies' bound \cite{D} who obtains \eqref{eq:massive} with constant $14.14\, D_{0,3}^{\cl}$ for $d=3$. To prove \eqref{eq:massive} we combine Theorem \ref{duality} and Remark \ref{ltsharp} to get
$$
N(\sqrt{-\Delta+m^2}-m -v) = N(-m^2,H(v+m)) 
\leq 10.332\, L_{0,d}^\cl \int_{\R^d} \left((v(x)+m)_+^2-m^2\right)_+^{d/2}\,dx\,,
$$
and recall that $L_{0,d}^\cl=D_{0,d}^{\cl}\,$.
\end{remark}

We return again to the massless case $m=0$.

\begin{remark}\label{daubechiesconstimproved}
The sharp constants in \eqref{eq:daubechies} satisfy
\begin{align}
  \label{eq:daubechiesconstimproved}
  D_{\gamma,2} \leq \sqrt 3 \pi\, D_{\gamma,2}^\cl &
  \qquad \text{if } 2\leq\gamma<3 \text{ and } d = 2,\notag \\
  D_{\gamma,2} \leq 4 \, D_{\gamma,2}^\cl &
  \qquad \text{if } \gamma\geq 3 \text{ and } d = 2,\\
  D_{\gamma,3} \leq \frac{15\pi}8 \, D_{\gamma,3}^\cl &
  \qquad \text{if } \gamma\geq 3 \text{ and }\, d =3. \notag
\end{align}
This improves upon \eqref{eq:daubechiesconst2} and \eqref{eq:daubechiesconst3}. To prove the first inequality in \eqref{eq:daubechiesconstimproved} we combine \eqref{eq:dualitymoments2} with \eqref{eq:constants} to get
$$
D_{2,2} \leq \frac\pi{\sqrt3} \, L_{1,2}^\cl = \sqrt 3 \pi\, D_{2,2}^\cl\, .
$$
By the argument of Aizenman and Lieb \cite{AL}, this implies $D_{\gamma,2} \leq \sqrt 3 \pi\, D_{\gamma,2}^\cl$ for all $\gamma\geq 2$. The other bounds in \eqref{eq:daubechiesconstimproved} are proved similarly.
\end{remark}

In conclusion we would like to recall a result by Birman, Koplienko and Solomyak.

\begin{proposition}\label{bks}
  Let $0<s<1$, $\gamma\geq1$ and $A$, $B$ non-negative operators such that $\tr(A-B)_+^{s\gamma}<\infty$. Then
  \begin{equation}
    \label{eq:bks}
    \tr(A^s-B^s)_+^\gamma \leq \tr(A-B)_+^{s\gamma}
  \end{equation}
\end{proposition}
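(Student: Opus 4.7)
The plan is to reduce \eqref{eq:bks} to the pointwise spectral counting estimate
\begin{equation*}
N(\nu, A^s - B^s) \leq N(\nu^{1/s}, A - B), \qquad \nu > 0,
\end{equation*}
where $N(\mu, T)$ denotes the number of eigenvalues of a self-adjoint operator $T$ strictly greater than $\mu$. The layer-cake formula (and the substitution $\mu = \nu^{1/s}$ on the right-hand side) gives the representations
\begin{equation*}
\tr(A^s - B^s)_+^\gamma = \gamma \int_0^\infty \nu^{\gamma - 1} N(\nu, A^s - B^s)\,d\nu
\end{equation*}
and
\begin{equation*}
\tr(A - B)_+^{s\gamma} = \gamma \int_0^\infty \nu^{\gamma - 1} N(\nu^{1/s}, A - B)\,d\nu,
\end{equation*}
so the counting estimate immediately implies \eqref{eq:bks} for every $\gamma > 0$.

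Since $\nu > 0$, the operators $B^s + \nu$ and $B + \nu^{1/s}$ are strictly positive, and Sylvester's law of inertia yields the Birman--Schwinger style identities $N(\nu, A^s - B^s) = N(1, W_1)$ and $N(\nu^{1/s}, A - B) = N(1, W_2)$ with
\begin{equation*}
W_1 := (B^s + \nu)^{-1/2} A^s (B^s + \nu)^{-1/2}, \quad W_2 := (B + \nu^{1/s})^{-1/2} A (B + \nu^{1/s})^{-1/2}.
\end{equation*}
It therefore suffices to show $\lambda_k(W_1) \leq \lambda_k(W_2)^s$ for every $k$, since then $\lambda_k(W_1) > 1$ forces $\lambda_k(W_2) > 1$. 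Applied pointwise in the spectrum of $B$, the scalar subadditivity $(a + b)^s \leq a^s + b^s$ (valid for $a, b \geq 0$ and $0 < s < 1$) with $a = B$ and $b = \nu^{1/s}$ gives $(B + \nu^{1/s})^s \leq B^s + \nu$, which on inversion becomes the operator inequality
\begin{equation*}
(B^s + \nu)^{-1} \leq (B + \nu^{1/s})^{-s}.
\end{equation*}
Using $\spec(XY) = \spec(YX)$ to rewrite $\lambda_k(W_1) = \lambda_k(A^{s/2}(B^s + \nu)^{-1} A^{s/2})$ and the monotonicity of eigenvalues under operator inequalities, one obtains
\begin{equation*}
\lambda_k(W_1) \leq \lambda_k\bigl(A^{s/2}(B + \nu^{1/s})^{-s} A^{s/2}\bigr).
\end{equation*}

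To close the argument I would invoke the pointwise Araki--Lieb--Thirring inequality: for $X, Y \geq 0$ and $0 < s \leq 1$,
\begin{equation*}
\lambda_k\bigl(X^{s/2} Y^s X^{s/2}\bigr) \leq \lambda_k\bigl((X^{1/2} Y X^{1/2})^s\bigr),
\end{equation*}
proved by Araki in 1990 via an antisymmetric tensor-power argument. Applied with $X = A$ and $Y = (B + \nu^{1/s})^{-1}$, the right-hand side equals $\lambda_k(W_2)^s$, completing the chain. I expect this last step to be the main obstacle: the weak-submajorization analogue is elementary via trace inequalities, but the eigenvalue-by-eigenvalue form is exactly the content of Araki's theorem and appears unavoidable for the sharp inequality \eqref{eq:bks}.
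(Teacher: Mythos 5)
There is a genuine gap, and it sits exactly at the step you flag as the main obstacle: the eigenvalue-by-eigenvalue inequality $\lambda_k\bigl(X^{s/2}Y^sX^{s/2}\bigr)\le\lambda_k\bigl((X^{1/2}YX^{1/2})^s\bigr)$ is \emph{not} the content of Araki's theorem, and it is false. Araki proves the log-majorization $\prod_{j\le k}\lambda_j\bigl(X^{s/2}Y^sX^{s/2}\bigr)\le\prod_{j\le k}\lambda_j\bigl(X^{1/2}YX^{1/2}\bigr)^s$ (the $k=1$ case being the Cordes inequality, the general case following by antisymmetric tensor powers). This controls the product of the top $k$ eigenvalues, not the $k$-th eigenvalue individually; from $\lambda_k(W_1)>1$ you can only conclude $\prod_{j\le k}\lambda_j(W_2)>1$, hence $\lambda_1(W_2)>1$, and the counting estimate $N(1,W_1)\le N(1,W_2)$ does not follow.

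This is not a repairable technicality: the pointwise counting inequality $N(\nu,A^s-B^s)\le N(\nu^{1/s},A-B)$ on which your whole reduction rests is itself false. Since $t\mapsto t^{1/s}$ is not operator monotone for $0<s<1$, one can choose $2\times2$ matrices $C>D>0$ such that $C^{1/s}-D^{1/s}$ has a negative eigenvalue. Put $A=C^{1/s}$, $B=D^{1/s}$. Then $A^s-B^s=C-D>0$ has two positive eigenvalues, while $A-B$ has exactly one positive eigenvalue (if it had none, $B\ge A$ would give $B^s\ge A^s$, a contradiction). Hence $N(\nu,A^s-B^s)=2>1=N(\nu^{1/s},A-B)$ for all small $\nu>0$. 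Your own observation that the argument would yield \eqref{eq:bks} for \emph{every} $\gamma>0$ should have been the warning sign: in the same example $\tr(A^s-B^s)_+^\gamma\to2$ while $\tr(A-B)_+^{s\gamma}\to1$ as $\gamma\to0$, so the hypothesis $\gamma\ge1$ is essential, and only an integrated (weak-submajorization) statement can hold — not a pointwise comparison of counting functions. For comparison, the paper does not reprove the proposition: it quotes \cite{BKS}, where the result is established for $A\ge B$ as a submajorization of eigenvalue sequences (whence traces of convex functions, i.e.\ $\gamma\ge1$), and removes the hypothesis $A\ge B$ via the observation of \cite{LSS} that $A^s-B^s\le\left(B+(A-B)_+\right)^s-B^s$.
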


In \cite{BKS} this is proved under the additional assumption $A\geq B$, but, as observed in \cite{LSS}, this assumption can be removed in view of the operator inequality
$$
A^s-B^s \leq \left( B + (A-B)_+ \right)^s -B^s \, .
$$
Moreover, \cite{LSS} contains an elementary proof of \eqref{eq:bks} in the case $\gamma=1$. We deduce from \eqref{eq:bks} and \eqref{eq:lt} that
\begin{equation}\label{eq:bksschroedinger}
	\tr(\sqrt{-\Delta}-v)_-^\gamma \leq \tr(-\Delta-v_+^2)_-^{\gamma/2} \leq L_{\gamma/2,d} \int_{\R^d} v(x)_+^{\gamma+d}\,dx,
	\quad \gamma\geq 1.
\end{equation}
Since the best known bounds on the constants $L_{\gamma/2,d}$ for $\gamma\geq 1$ coincide with those for $S_{\gamma/2,d}$, Proposition \ref{bks} yields for $\gamma\geq 1$ the same bounds on $D_{\gamma,d}$ as our Theorem \ref{duality}. In particular, Remark \ref{daubechiesconstimproved} can also be derived via \eqref{eq:bksschroedinger}. In contrast, the $\gamma=0$ result of Remark \ref{massive} cannot be deduced via \eqref{eq:bksschroedinger}. It is interesting, in our opinion, to understand whether there is a deeper connection between the $s=1/2$ case of Proposition \ref{bks} and Theorem \ref{duality}.


\subsection*{Acknowledgments}  
This work has been supported by DAAD grant D/06/49117 (R. F.).


\bibliographystyle{amsalpha}

\end{document}